\documentclass[letterpaper,12pt]{amsart}

\usepackage{enumerate}

\usepackage{color}
\usepackage{ulem,ifthen,xcolor,xkeyval,pdfcolmk}

\usepackage[abbrev]{amsrefs}
\usepackage{amssymb,amsmath,amsthm}

\usepackage{amsfonts}
\usepackage{graphicx}

\usepackage{hyperref}

\newtheorem{theorem}{Theorem}[section]
\newtheorem{acknowledgement}[theorem]{Acknowledgement}

\newtheorem{corollary}[theorem]{Corollary}

\newtheorem{definition}[theorem]{Definition}

\newtheorem{lemma}[theorem]{Lemma}

\newtheorem{notation}[theorem]{Notation}

\newtheorem{proposition}[theorem]{Proposition}
\newtheorem{remark}[theorem]{Remark}

\newtheorem{Remark}[theorem]{Remark}

\def \L{\Lambda}

\def \<{\langle}
\def \>{\rangle}

\def \R{\mathbb R}

\def \D{{\mathcal D}}
\def \H{{\cal H}}

\def \H^0{{\cal H}^0 or}

\def \w{\omega}

\def \kf{\frak k}
\def \p{\partial}
\def \beq{\begin{equation}}
\def \eeq{\end{equation}}

\def \V{{\mathcal V}}

\def \n{\nabla}
\def \eref{\eqref}

\def \lrc{\lrcorner}

\def \ex{{\bf x}}
\def \J{{\bf J}}
\def \Ae{{\bf A}}
\def \i{{\bf i}}
\def \j{{\bf j}}

\usepackage{tikz}

\numberwithin{equation}{section}

\setcounter{tocdepth}{3}

\begin{document}

\title{Neumann domination for the Yang-Mills heat equation}

\author{Nelia Charalambous}
\address{Department of Mathematics and Statistics, University of Cyprus, Nicosia, 1678, Cyprus} \email[Nelia Charalambous]{nelia@ucy.ac.cy}

\author{Leonard Gross}
\address{Department of Mathematics, Cornell University,  Ithaca, NY 14853-4201, USA}
\email[Leonard Gross]{gross@math.cornell.edu}

\thanks{N. Charalambous  was supported in part by NSF Grant DMS-0072164,
 NSF Grant DMS-0223098, by CONACYT of Mexico and by Asociaci\'{o}n
 Mexicana de Cultura A.C..}
 \date{\today}

\subjclass[2010]{Primary; 35K58, 35K65, Secondary;  70S15, 35K51, 58J35.}

\keywords{Yang-Mills, heat equation,
manifolds with boundary, Gaffney-Friedrichs inequality, weakly parabolic,
 Neumann domination, long time behavior.}
\maketitle

\begin{abstract}
 Long time existence and uniqueness of solutions
to the Yang-Mills heat equation
have been
 proven over a compact 3-manifold with
  boundary
 for initial data  of finite energy.
In  the present paper we improve on previous estimates by using a
Neumann domination  technique that allows us
  to get much better pointwise bounds
on the magnetic field.
As in the earlier work, we focus on Dirichlet, Neumann and Marini boundary conditions.
        In addition, we show that the  Wilson Loop functions, gauge invariantly regularized,
 converge as the parabolic time goes to infinity.
\end{abstract}

\tableofcontents

\section{Introduction}

 A gauge invariant regularization method for  Wilson loop variables appears
to be  an unavoidable necessity for
  construction of quantized Yang-Mills fields.   The standard methods
  of regularizing a quantum field, that have been successful in studying scalar field theories,
  are inapplicable to gauge fields.
  Thus
  a simple weighted average $\int_{\R^3} f(x-y) A(y) d^3y$ destroys gauge invariance
  of the gauge potential $A$. Similar expressions, such as $\int_{R^n} f(x-y) F(y) d^n y$,
  with $F(y)$ the curvature of $A$, also destroy gauge invariance,
   both for a space average, with $n=3$, or a  Euclidean space-time average, with $n=4$.
  For a closed curve $C$ in $R^3$
    the Wilson loop variable, $W_C(A) \equiv trace\, T(\exp(\int_C A(x)\cdot dx))$,
 where $T$ denotes time ordering around the loop, is gauge invariant but highly singular
 as a function of $A$ when  $A$ varies over the very large space of typical gauge fields required
 in the quantized theory.
            The lattice regularization of
  these functions of the gauge fields  has been the only useful gauge invariant
  regularization procedure so far but has not  produced a continuum limit.

 Polyakov \cite{Pol1,Pol2} already observed that the vacuum expectation
 of continuum Wilson loop variables are likely to be zero for a non-commutative gauge group.
 They are zero in the electromagnetic case.
 Nevertheless
 it has been hoped that the informal symbol, defined as $W_C(A)/ \<W_C\>_{VEV}$,
 which nominally is identically infinite in absolute value, could play a central role in
 a gauge invariant formulation of some future  internally consistent quantized Yang-Mills theory.
Such a program was outlined by E. Seiler, \cite[Pages 163-181]{Sei}.
 Many steps toward carrying this out were made  by a renormalization group approach
  in a series of papers by T. Balaban.
 See e.g. \cite{Bal}.

       In \cite{CG1} we began a regularization program based on use of  the Yang-Mills heat
 equation for regularizing gauge fields. The magnetic energy of a classical gauge field
 $A$ over three dimensional space is $\int_{R^3}|B(x)|^2 d^3x$,  where $B$ is the magnetic field
 ($\equiv$ curvature) of $A$.  The Yang-Mills heat equation
 flows $A$ in the direction of the negative of the gradient of the magnetic energy.
  It  is a non-linear, weakly parabolic equation with difficulties of its own.
 But it is fully gauge invariant: if one transforms the initial data $A_0$ by a gauge transformation
  on $\R^3$ and then propagates, one arrives at  the same gauge field as if one first propagates
 $A_0$ and then gauge transforms. Moreover the flow regularizes the initial data
 well enough so that
 the Wilson loop function $W_C(A(s))$ is meaningful for any fixed time $s >0$,
even  when $W_C(A_0)$ itself  is meaningless. Most importantly,  $W_C(A(s))$
is gauge invariant under  gauge transforms of the initial data $A_0$.
Here $A(s)$ is the solution to the Yang-Mills heat flow equation at time $s$.
The Yang-Mills heat flow has also been used for regularization as part of  a
  method for  implementing a Monte Carlo computational protocol for lattice gauge theory,
  \cite{L1,L2,L3,LW} .

 Like other heat equations, the Yang-Mills heat equation propagates information
 instantly. This would cause problems for local quantum field theory because
 one    wants $W_C(A(s))$  to capture information about $A_0$ just in  a small neighborhood
 of the curve $C$, not over all of $\R^3$. This issue can be resolved by using
 the Yang-Mills heat  equation regularization  over a bounded open
  set $M$ in $\R^3$ that contains $C$.
For this procedure one must prove existence and uniqueness of the solution
  when the initial data  is specified only in  $M$.  Of course for uniqueness one needs then to specify
  boundary conditions on
   the solution $A(s)$ for $ s >0$.
  These in turn
  must be gauge invariant and must allow use of initial data which are the restrictions to $M$
  of a typical gauge field $A_0$ on $\R^3$.  The classical Neumann  and Dirichlet boundary
  conditions will be vital boundary conditions for us for technical use. But in the end
  Marini boundary conditions, which simply set the normal
  component of
  the magnetic field $B(s)$  to zero on the boundary of $M$,  are the only ones that
  are fully gauge invariant. We will explore all three boundary conditions in this paper.

      We used, in \cite{CG1},  the Zwanziger-Donaldson-Sadun
\cite{Z}, \cite{Do1},  \cite{Sa} method for proving existence
of solutions to the Yang-Mills heat equation, which consists of adding a gauge symmetry breaking
term to the equation and then removing it from the solution by gauge transformation.
 The ZDS procedure does not enter directly into the present paper since our goal
 is to establish further properties of a solution whose existence we already know.
 Instead we will use the fact that the absolute values $|B(s, x)|$ and $|(d/ds) A(s, x)|$ satisfy
 parabolic inequalities with Neumann-like  boundary conditions.
 Our goal is to get detailed information about the behavior of these two
  functions as $s\downarrow 0$
  in order to help pass, eventually, to more general initial data.
Some of the initial steps in this technique will be carried out
over a compact manifold with boundary
rather than just over a bounded open set in $\R^3$ because they provide illumination
 as to what the techniques depend on, and there is little extra cost.

           In \cite{CG1} we established  existence and uniqueness of solutions
   in case the initial data $A_0$ is in the
 Sobolev space  $H_1(M)$. This corresponds to initial data of finite magnetic energy.
  In order to get this   program to work we anticipate that it will be necessary
  to extend the results in \cite{CG1} so as to allow  the initial data to lie in the
   larger space $H_{1/2}(M)$, which corresponds to initial data of finite magnetic action.
In the present paper we will still focus on initial data in $H_1$. However this is already
broad enough to include gauge fields that need to be regularized before their
Wilson loop functional can be defined.
           We will give an example in Section \ref{secLTB} of a current distribution in $\R^3$ whose magnetic
 field has finite energy but nevertheless gives infinite magnetic flux through certain loops,
 rendering the Wilson loop functional  for these loops meaningless.

Although our main concern is the  behavior of the solution for small time,
   we are also going to prove  that the
   Wilson loop functions  $W_C(A(s))$ converge as $s \rightarrow \infty$ for any initial
   gauge potential $A_0$ in $H_1$.

\section{Neumann domination}    \label{secNdom}

\begin{notation} \label{not2.1}{\rm $M$ will denote a compact
Riemannian 3-manifold with smooth boundary.  We will be concerned
 with a product bundle $M\times \V \rightarrow M$, where $\V$ is a finite dimensional
 real or complex vector space with an inner product.
$K$ will denote
a compact connected subgroup of the orthogonal, respectively, unitary group of
the space $End\ \V$, of operators on $\V$ to $\V$.
The Lie algebra of $K$, denoted $\frak k$, may then be identified with a
 real  subspace of $End\ \V$.
 We denote by  $\<\cdot, \cdot\>$  an  $Ad\ K$
 invariant inner product on $\frak k$ and denote its associated norm
 by $|\xi |_{\frak k}$ for $\xi \in \frak k$. We will not distinguish between
  $|\xi|_{\frak k}$ and $|\xi|_{End \V}$, which are equivalent norms.

 If $\w$ and $\phi$ are $\frak k$ valued $p$-forms  define
 $(\w, \phi) = \int_M\<\w(x), \phi(x)\>_{\L^p\otimes \frak k} d\, \text{Vol}$
 and $\|\w \|_2^2 = (\w, \w)$.
 Define also
 $\|\w\|_\infty = \sup_{x \in M}|\w(x)|_{\L^p \otimes \frak k}$ and
 \beq
 \| \w \|_{W_1(M)}^2
 = \int_M |\n \w|_{\Lambda^p\otimes\frak k}^2 d\, \text{Vol}\ \ + \| \w \|_2^2   \label{ymh2}
 \eeq
 where $\n$ is the Riemannian gradient on forms and $\n \w$ refers to
 the weak derivative. Define
 $W_1 = W_1(M) = \{ \w: \|\w\|_{W_1(M)} <\infty\}$.
 Since we are concerned only with a product bundle, a connection form
  can be identified with a $\frak k$ valued 1-form.
  For a connection form $A$, given in local coordinates by
  $A = \sum_{j=1}^3 A_j(x) dx^j$, its curvature (magnetic field) is given by
   \beq
 B = dA +(1/2) [A\wedge A]                                \label{ymh3}
 \eeq
 where $[A\wedge A] = \sum_{i,j} [A_i, A_j] dx^i\wedge dx^j$ and
 $[A_i(x), A_j(x)]$
 is the commutator in $\frak k$. $B$ is a   $\frak k$ valued 2-form.
 For $\w \in W_1$
  we define  $d_A \w = d \w + (ad\ A) \wedge \w$ and
  $d_A^* \w = d^*\w + (ad\ A\wedge)^* \w$. Boundary conditions
   will be imposed on these operators later.
  }
   \end{notation}

 We recall from \cite{CG1} the definition of a strong solution of the
    Yang-Mills heat equation.

\begin{definition}\label{defstrsol} {\rm Let $0 < T \le \infty$.
By a {\it strong solution} to the Yang-Mills heat
equation over $[0, T)$
we mean a continuous function
\beq
A(\cdot): [0,T) \rightarrow W_1 \subset \frak k\text{-valued 1-forms}
\eeq
such that
\begin{align}
a)& \ B(t) \in W_1 \  \text{for each}\ \  t\in (0,T),\
                  \text{where}\ B(t) =  \text{ curvature of}\ A(t),                \label{ymh8}\\
b)& \  \text{the strong $L^2(M)$ derivative $A'(t) \equiv dA(t)/dt $}\
                                      \text{exists on}\ (0,T),   \nolinebreak         \label{ymh9}    \\
c)& \  A'(t) = - d_{A(t)}^* B(t)\ \ \text{for each}\ t \in(0, T).               \label{ymh10}
\end{align}
A strong solution will be called {\it locally bounded} if
\begin{align}
&d) \  \| B(t)\|_\infty\ \text{is bounded on each
                             bounded interval $ [a,b) \subset (0, T)$ and}\    \label{ymh11}\\
&e) \  t^{3/4} \| B(t)\|_\infty\
           \text{is bounded on some interval $(0, b)$ with $0<b <T$.} \label{ymh12}
\end{align}
}
\end{definition}

\bigskip
\noindent
We are interested in three kinds of boundary conditions.

\bigskip
\noindent
{\it Neumann boundary conditions:}
\begin{align}
&i)\ \ \  A(t)_{norm} =0\ \ \text{for}\ t \ge 0 \ \text{and}    \label{N1}\\
&ii)\ \ B(t)_{norm}=0\ \   \text{for}\  t >0.         \label{N2}
\end{align}
\noindent
{\it Dirichlet boundary conditions:}
\begin{align}
 &i)\ \ \  A(t)_{tan} =0\ \ \ \text{for}\ t \ge 0 \ \text{and}    \label{D1}\\
  &ii)\ \  B(t)_{tan} =0 \ \ \ \text{for}\ t >0.                 \label{D2}
 \end{align}
{\it Marini boundary conditions:}
 \beq
 B(t)_{norm}=0\  \ \text{for}\ \ t >0.         \label{M1}
   \eeq

Given a solution of the Yang-Mills  heat equation, \eref{ymh10},
we are going to make pointwise estimates of
$|B(s,x)|_{\Lambda^2\otimes \frak k}$ and
$|A'(s,x)|_{\Lambda^1\otimes \frak k}$
based on  parabolic inequalities that these functions satisfy
for the Neumann Laplacian on real valued functions over $M$.
The final step in our method will require that  $\p M$ be convex in the sense that the
second fundamental form be non-negative on $\p M$.


\subsection{Sub-Neumann boundary conditions}   \label{secNdom1}

In this section $M$ will denote a compact, $n$-dimensional, Riemannian
manifold with a smooth, not necessarily convex,  boundary.

        \begin{proposition} \label{propsubNeu}
                    $($Sub-Neumann boundary conditions.$)$
Denote the extended shape operator by $Q(x) \in End( \L (T^* ( \p M))$
 $($the extension by derivation of the adjoint of the usual shape operator.
See e.g. {\rm \cite[Notation 4.6]{CG1}$)$.}
Denote by $\nabla_{\bf n}$
the outward drawn normal derivative operator.
  Let $A$ be a continuous $\frak k$ valued 1-form on $M$ and let $\w$ be a   $\frak k$ valued $p$-form on $M$ of class $C^1$.

a$)$ If
\beq
\w_{norm} = 0 \ \text{and}\ (d_A \w)_{norm} =0         \label{neu50}
\eeq
then
\beq
\nabla_{\bf n} | \w|^2 = -2\< \{I_{\frak k} \otimes Q\} \w, \w\>\
                                                             \ \text{on}\ \ \p M.  \label{neu51}
\eeq

b$)$ If
\beq
\w_{tan} =0 \ \ \text{and}\ \ (d_A^* \w)_{tan} = 0               \label{neu52}
\eeq
then
\beq
\nabla_{\bf n} | \w|^2 = -2\< \{I_{\frak k} \otimes(*^{-1} Q*)\} \w, \w\>\ \
                \text{on}\ \ \p M.                                                  \label{neu53}
 \eeq
\end{proposition}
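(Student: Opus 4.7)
The plan is to prove (a) by a direct computation in boundary normal coordinates, then to deduce (b) from (a) by Hodge duality. Near $\partial M$ choose coordinates $(y, x_n)$ in which the metric takes the form $g = dx_n^2 + g_{\partial}(y, x_n)$, with $\partial M = \{x_n = 0\}$ and $\partial_n = \partial_{x_n}$ the outward unit normal. Decompose $\omega = \omega^{t} + dx^n \wedge \omega^{\nu}$ into its tangential and normal parts; the hypothesis $\omega_{norm} = 0$ is exactly $\omega^{\nu}|_{x_n = 0} = 0$. On $\partial M$ this gives $|\omega|^2 = \sum_{I, J} g^{IJ}(y, 0)\,\langle \omega_I, \omega_J \rangle_{\frak k}$, with $I, J$ ranging over tangential multi-indices and $g^{IJ} = \det(g^{i_a j_b})$.

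Differentiating $|\omega|^2$ in the outward normal direction at $x_n = 0$ produces two contributions. The first, $\sum (\partial_n g^{IJ})\,\langle \omega_I, \omega_J \rangle$, is the shape-operator contribution: $\partial_n g_{ij}|_{x_n = 0}$ encodes the second fundamental form of $\partial M$, and applying the Leibniz rule to the Gram-determinant expression for $g^{IJ}$ produces precisely the derivation extension $Q$ of the adjoint of the shape operator acting on $\Lambda^p T^* \partial M$, yielding the term $-2\langle (I_{\frak k} \otimes Q)\omega, \omega \rangle$. The second contribution, $2\sum g^{IJ}\,\langle \partial_n \omega_I, \omega_J \rangle$, is shown to vanish using the hypothesis $(d_A \omega)_{norm} = 0$: splitting $A = A^{t} + a\, dx^n$ with $a = A(\partial_n) \in \frak k$ and computing $d_A \omega = d\omega + [A \wedge \omega]$ locally, the fact that $\omega^{\nu}$ and hence its tangential derivatives vanish on $\partial M$ reduces the normal part of $d_A \omega$ at the boundary to $dx^n \wedge \sum_I (\partial_n \omega_I + [a, \omega_I])|_{x_n = 0}\, dy^I$. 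Setting this to zero gives $\partial_n \omega_I|_0 = -[a, \omega_I|_0]$; substituting converts the second sum into $-2\sum g^{IJ}\,\langle [a, \omega_I], \omega_J \rangle$, which vanishes because $Ad$-invariance of $\langle \cdot, \cdot \rangle_{\frak k}$ makes $\langle [a, \omega_I], \omega_J \rangle$ antisymmetric in $(I, J)$ whereas $g^{IJ}$ is symmetric.

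Part (b) follows by applying part (a) to $*\omega$. The Hodge star commutes with the $\frak k$-action and swaps tangential and normal parts, so $\omega_{tan} = 0$ becomes $(*\omega)_{norm} = 0$; and the identity $*\, d_A = \pm\, d_A^* *$, which holds with the same sign pattern as for $d$ and $d^*$ since $Ad$-invariance of $\langle \cdot, \cdot \rangle_{\frak k}$ makes the bracket term $[A \wedge \cdot]$ and its formal adjoint dual under $*$ in the same way, converts $(d_A^* \omega)_{tan} = 0$ into $(d_A *\omega)_{norm} = 0$. Part (a) applied to $*\omega$, combined with $|*\omega|^2 = |\omega|^2$ and the identity $\langle Q(*\omega), *\omega \rangle = \langle *^{-1} Q *\omega, \omega \rangle$, yields the formula of (b).

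The main obstacle is the bookkeeping in the shape-operator contribution, namely verifying that differentiating $g^{IJ} = \det(g^{i_a j_b})$ in the normal direction reproduces exactly the derivation extension $Q$ of the adjoint shape operator as defined in \cite{CG1}, and reconciling the sign conventions for the outward normal and the shape operator so that the coefficient in (a) is $-2$ rather than $+2$. The remaining ingredients---$Ad$-invariance to eliminate the bracket terms and Hodge duality for part (b)---are standard.
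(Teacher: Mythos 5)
Your argument is correct and follows essentially the same route as the paper: adapted (boundary normal) coordinates, the tangential/normal decomposition of $\w$, extraction of the shape-operator contribution from the metric-dependent part, elimination of the remaining term via $(d_A\w)_{norm}=0$ together with $Ad$-invariance of $\<\cdot,\cdot\>_{\kf}$, and Hodge duality for part (b). The only cosmetic differences are that the paper packages the gauge term into the covariant derivative $\p_n^A\beta_J$ and observes it vanishes outright, and obtains the shape-operator term from $\n_n\,dx^J=-Q\,dx^J$ rather than from $\p_n g^{IJ}$; these are the same computation organized slightly differently.
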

         \begin{proof} In a neighborhood $U$ of the boundary choose
an adapted coordinate system $(U, x^1,\dots, x^n)$. See e.g. \cite[Notation 4.2]{CG1}.
 We can write a $\frak k$ valued $p$-form in $U$ as
\beq
\w = \beta + \gamma \wedge dx^n   \label{neu54}
\eeq
with $\beta = \sum_{J} \beta_J dx^J$ and
$\gamma = \sum_{I} \gamma_I dx^I$.
The multi-indices will signify generically  $J =(j_1,\dots, j_p)$
 with $ j_1<\cdots < j_p < n$ and
$I = (i_1,  \dots, i_{p-1})$ with $ i_1 < \cdots < i_{p-1} < n$.
Since
$\< dx^j, dx^n\> = 0$ for all $j <n$ we have
$$
|\w(x)|^2 = |\beta(x)|^2 + | \gamma(x) \wedge dx^n|^2    \ \ \text{in}\ U.
$$

      Suppose first that $\w_{norm} =0$ on $U\cap \p M$.
 That is, $\gamma|(U\cap \p M) =0$. Then, writing $\p_j^A = \p/\p x^j +ad\ A_j$
 on $\frak k$ valued functions, and $\nabla_j^A = \nabla_j +ad\ A_j$ for the Riemann covariant derivative on $\frak k$ valued forms, we have,  at $U\cap \p M$,
\begin{align}
(1/2) \p_n |\w(x)|^2
&= \< \n_n^A  \beta(x), \beta(x) \>
           + \< \n_n^A( \gamma(x) \wedge dx^n), \gamma(x) \wedge dx^n \> \notag \\
&= \< \n_n^A \beta(x), \beta(x)\>,   \label{neu55}
\end{align}
because $\gamma(x) = 0$ on $U\cap \p M$.
Now in $U$,
\beq
\n_n^A \beta = \sum_J (\p_n^A \beta_J) dx^J
     + \sum_J \beta_J(\n_n  dx^J).    \notag
     \eeq
 But
  $\sum_J \beta_J(\n_n  dx^J)
        =-\sum_J \beta_JQ dx^J
 =   - (I_{\frak k} \otimes Q) \beta$  at $U\cap \p M$.
 So
     \begin{align}
   \n_n^A \beta    = \sum_J(\p_n^A \beta_J) dx^J
            - (I_{\frak k} \otimes Q) \beta\ \     \text{at}\ \ U\cap \p M.     \label{neu56}
   \end{align}
   Further,
   \begin{align}
    d_A(\gamma \wedge dx^n)
    & = (d_A \gamma)\wedge dx^n \notag \\
    &= \sum_{j=1}^{n-1}
                  \sum_I(\p_j^A \gamma) dx^j\wedge dx^I \wedge dx^n \notag \\
    &=0 \ \ \text{on}\ \ U\cap \p M ,                                                \label{neu56.1}
    \end{align}
    because $\gamma$, as well as any  tangential derivative of $\gamma$, is zero
    when $\w_{norm} =0$.
           Hence,  if both equations in \eref{neu50} hold, then, by \eref{neu54}
   and \eref{neu56.1}, we find
    \begin{align*}
    0 = (d_A\w)_{norm} &= (d_A \beta)_{norm}\\
    &= \sum_J (\p_n^A \beta_J) dx^n \wedge dx^J.
    \end{align*}
    Thus $ \p_n^A \beta_J =0$ on $U\cap \p M$ and consequently \eref{neu56}
     reduces to
    \beq
    \n_n^A \beta = - (I_{\frak k} \otimes Q) \beta
                \ \ \text{on}\ \ U\cap \p M.                        \label{neu57}
    \eeq
    Therefore \eref{neu55} yields
      $(1/2) \p_n | \w |^2 = - \<(I_{\frak k} \otimes Q) \beta , \beta\>
    = -\<(I_{\frak k} \otimes Q)\w, \w \>$,
    which is \eref{neu51} since $\n_{\bf n} = \p_n$ in this coordinate system.
          In the last step we have used $\beta = \w$ on $U\cap \p M$.

   To prove the assertion in case b) one need only observe that
       if \eref{neu52} holds for $\w$ then \eref{neu50} holds for $* \w$.
    Consequently
        \begin{align*}
        \p_n | \w|^2 & = \p_n |*\w|^2 \\
         &= -2 \< (I_{\frak k} \otimes Q) *\w, , * \w \>,
         \end{align*}
         which is \eref{neu53}.
         \qed
\end{proof}


     \begin{corollary}   \label{corsubneu} In addition to the hypotheses of Proposition
\ref{propsubNeu}, suppose
 that $M$ is convex in the sense that its
second fundamental form is everywhere non-negative on $\p M$.
If  \eref{neu50} or \eref{neu52} hold then
\beq
\nabla_{\bf n}  | \w |^2 \le 0.                                     \label{neu58}
\eeq
If, in addition, $\p M$ is totally geodesic then
\beq
\nabla_{\bf n}  |\w|^2 =0 .                                           \label{neu59}
\eeq
\end{corollary}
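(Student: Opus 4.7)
The corollary is essentially a direct geometric reading of the two identities \eqref{neu51} and \eqref{neu53} in Proposition \ref{propsubNeu}, so the plan is to reduce everything to a statement about the non-negativity of the extended shape operator $Q$ acting on $\Lambda(T^*(\partial M))$.

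First I would record the key algebraic fact: since $Q$ on $T^*(\partial M)$ is the adjoint of the Weingarten map, its eigenvalues at a point are precisely the principal curvatures $\kappa_1,\dots,\kappa_{n-1}$. Because $Q$ is extended by derivation to $\Lambda(T^*(\partial M))$, the eigenvalues of $Q$ on $\Lambda^p(T^*(\partial M))$ are the $p$-fold sums $\kappa_{j_1}+\cdots+\kappa_{j_p}$. The convexity hypothesis says $\kappa_i\ge 0$ for all $i$, so all eigenvalues of $Q$ on every $\Lambda^p$ are non-negative, and therefore $I_{\frak k}\otimes Q$ is a non-negative symmetric operator on $\frak k\otimes\Lambda(T^*(\partial M))$.

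In case (a) I would then insert this into \eqref{neu51}: the pairing $\langle (I_{\frak k}\otimes Q)\omega,\omega\rangle$ is non-negative, so the minus sign yields $\nabla_{\bf n}|\omega|^2\le 0$. For case (b) I would observe that $*\colon\Lambda(T^*(\partial M))\to\Lambda(T^*(\partial M))$ is a pointwise isometry, so
\begin{equation*}
\langle (I_{\frak k}\otimes(*^{-1}Q*))\omega,\omega\rangle = \langle (I_{\frak k}\otimes Q)(I_{\frak k}\otimes *)\omega,(I_{\frak k}\otimes *)\omega\rangle \ge 0,
\end{equation*}
and plugging into \eqref{neu53} gives \eqref{neu58} again.

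For the totally geodesic statement, the second fundamental form vanishes, so every principal curvature is zero, hence $Q=0$ as an operator on $\Lambda(T^*(\partial M))$, and consequently $*^{-1}Q*=0$ as well. Both right-hand sides in \eqref{neu51} and \eqref{neu53} are therefore identically zero on $\partial M$, yielding \eqref{neu59}. No serious obstacle is anticipated; the only point that requires a clean statement is the derivation-extension formula for the eigenvalues of $Q$ on $\Lambda^p$, which I would quote from the reference \cite[Notation 4.6]{CG1} already cited in the proposition.
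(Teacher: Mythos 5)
Your proof is correct and follows essentially the same route as the paper's: the paper likewise observes that convexity gives $Q(x)\ge 0$ on $\partial M$, that $*^{-1}Q*$ is therefore also non-negative (being a unitary transform of $Q$), and that \eqref{neu58} then follows from \eqref{neu51} and \eqref{neu53}, with the totally geodesic case handled identically by noting $Q=0$. The only difference is that you fill in the elementary spectral detail (principal curvatures, derivation extension to $\Lambda^p$) that the paper takes as understood.
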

   \begin{proof}
If $M$ is convex then  $Q(x) \ge 0$ on $\p M$ as is its unitary
 transform $*^{-1} Q(x) *$.
The inequality \eref{neu58} now follows from \eref{neu51} and \eref{neu53}.
If $\p M$ is totally geodesic then $Q(x) =0$ on $\p M$ as is
its transform  $*^{-1} Q(x) *$. \eref{neu59} now follows in the same way.
\end{proof}

           \begin{Remark}
 {\rm In the context of a Riemannian $n$-manifold without further vector
  bundle structure, the first author found, in \cite{Cha4},  that the Neumann boundary condition \eref{neu59}   follows from either of the
   boundary conditions $(a)  \ \w_{norm} = 0$ and $(d \w)_{norm} =0$ or
 $(b)  \ \w_{tan} =0$ and $(d^* \w)_{tan} = 0$, in the presence of a slightly weaker
 condition on the boundary than used in this paper. Namely, it was shown that if
 $\w$ is an $(n-1)$-form then it suffices that the trace of the second fundamental form be
 zero. But for lower order forms the condition that the boundary be totally geodesic was needed. In the present paper the weakened hypothesis would be applicable, in case dim $M =3$,
 to the curvature $B$ but not to $A'$.

       In \cite{Cha4}, in addition to the Hodge Laplacian, the first author considered the
  Bochner Laplacian and showed that, for the relevant notion of Dirichlet boundary
  condition on the $k$-form $\w$, no conditions on the boundary are needed to conclude
  \eref{neu59}.
 }
 \end{Remark}


\subsection{Domination by the Neumann heat kernel}    \label{secNdom2}

In this section $M$ will denote the closure of a  bounded open subset of $\R^n$
with smooth boundary.
$\Delta$ will denote the Laplacian on real valued functions on $M$
with domain $W_2(M)$ and $\Delta_N$ will denote the Neumann version.
Some aspects of the techniques we are exploring  have been used
for manifolds without boundary in \cite{Do1} and \cite{Sa} for the Yang-Mills heat equation.


              \begin{lemma}\label{lemNdom1}
     Let $\psi$ be a real valued function in $W_2(M)$  whose outer
 normal derivative satisfies
     \beq
      \n_{\bf n} \psi \le 0 \ \ \ \text{a.e. on }\ \p M.                                  \label{neu60}
      \eeq
      Then
      \beq
 e^{t\Delta_N} \Delta \psi
    \le \Delta_N e^{t\Delta_N} \psi\ \   a.e. \ \ \text{for all}\ \ t>0.              \label{neu61}
      \eeq
   \end{lemma}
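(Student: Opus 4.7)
The plan is to dualize the desired pointwise inequality by testing against an arbitrary nonnegative $\phi \in \mathrm{Dom}(\Delta_N)$ and exploiting Green's identity together with the sign hypothesis on $\nabla_{\bf n}\psi$. The nonnegativity that Green's identity will demand on the boundary comes for free from the positivity-preserving property of the Neumann heat semigroup, applied to $v := e^{t\Delta_N}\phi$.

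Set $v := e^{t\Delta_N}\phi$ for such a $\phi$. Using self-adjointness of $e^{t\Delta_N}$ on $L^2(M)$, self-adjointness of $\Delta_N$ on its domain, and the commutation $\Delta_N e^{t\Delta_N}\phi = e^{t\Delta_N}\Delta_N\phi$ (valid because $\phi \in \mathrm{Dom}(\Delta_N)$), I would rewrite the two sides of \eqref{neu61} as
\begin{align*}
\int_M \phi \cdot e^{t\Delta_N}\Delta\psi\,d\,\text{Vol}
&= \int_M v\,\Delta\psi\,d\,\text{Vol}, \\
\int_M \phi \cdot \Delta_N e^{t\Delta_N}\psi\,d\,\text{Vol}
&= \int_M (\Delta_N v)\,\psi\,d\,\text{Vol}.
\end{align*}
Since $v \in \mathrm{Dom}(\Delta_N) \subset W_2(M)$ with $\nabla_{\bf n} v = 0$ on $\partial M$, Green's identity applied to the pair $(v,\psi) \in W_2(M) \times W_2(M)$ yields
\begin{equation*}
\int_M v\,\Delta\psi\,d\,\text{Vol} - \int_M (\Delta v)\,\psi\,d\,\text{Vol}
= \int_{\partial M} v\,\nabla_{\bf n}\psi\,d\sigma,
\end{equation*}
and the right-hand side is $\le 0$ because $v \ge 0$ (Neumann heat semigroup preserves positivity) and $\nabla_{\bf n}\psi \le 0$ by hypothesis. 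Noting that $\Delta v = \Delta_N v$ as $L^2$ functions, subtraction gives
\begin{equation*}
\int_M \phi \cdot \bigl(e^{t\Delta_N}\Delta\psi - \Delta_N e^{t\Delta_N}\psi\bigr)\,d\,\text{Vol} \le 0.
\end{equation*}
Taking $\phi$ to range over nonnegative elements of $C_c^\infty(\mathrm{int}\,M) \subset \mathrm{Dom}(\Delta_N)$, which is dense in the nonnegative cone of $L^2(M)$, delivers the pointwise a.e.\ inequality \eqref{neu61}.

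The three ingredients to invoke are standard: positivity preservation of $e^{t\Delta_N}$ on a bounded smooth Euclidean domain (via the maximum principle, or a Beurling--Deny argument); the smoothing property $e^{t\Delta_N}:L^2(M)\to\mathrm{Dom}(\Delta_N)$ for every $t>0$, which gives $v$ enough regularity for Green's identity; and the trace $\nabla_{\bf n}\psi \in H^{1/2}(\partial M) \subset L^2(\partial M)$ for $\psi \in W_2(M)$, so that the sign hypothesis is meaningful and the boundary integral converges. The one conceptual subtlety — and the reason the conclusion is an inequality rather than an identity — is that $\psi$ itself is not required to satisfy a Neumann condition, so the boundary term in Green's identity is not forced to vanish; it is precisely the sign assumption $\nabla_{\bf n}\psi \le 0$ that controls that term and furnishes the one-sided estimate.
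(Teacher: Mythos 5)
Your proof is correct and follows essentially the same route as the paper: both dualize against a nonnegative test function, apply the Neumann semigroup to move it to the other side, invoke positivity preservation, and use Green's identity (equivalently, two successive integrations by parts in the paper's presentation) so that the only surviving boundary term is $\int_{\partial M} v\,\nabla_{\bf n}\psi$, controlled by the sign hypothesis. The rearrangement you make — first rewriting both sides via self-adjointness and then applying Green's identity to the pair $(v,\psi)$ — is cosmetically different from the paper's sequence (establish $(\Delta\psi,f)\le(\psi,\Delta_N f)$ first, then substitute $f=e^{t\Delta_N}\phi$) but is the same argument.
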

                 \begin{proof}
    If $ 0 \le f \in \D(\Delta_N)\cap C^\infty(M)$ then $\n_{\bf n} f =0$ on $\p M$.
        Hence
  \begin{align*}
  (\Delta \psi, f) & = \int_M (div\ grad\ \psi) f dx \\
  & = \int_{\p M} ( {\bf n}\cdot grad\ \psi) f - \int_M\< (grad\ \psi), (grad\ f) \> dx\\
  &\le -  \int_M\< (grad\ \psi), (grad\ f) \> dx,
  \end{align*}
  in view of \eref{neu60}.
  Since $f \in \D(\Delta_N)$ we may integrate by parts once more to find
  \beq
  (\Delta \psi, f) \le ( \psi, \Delta_N f).     \label{neu65}
  \eeq
    Now let $ 0 \le \phi \in C_c^\infty(M^{\text{int}})$. We may
    apply \eref{neu65} to $f \equiv e^{t\Delta_N} \phi$ because
  $e^{t\Delta_N}$ is positivity preserving.
  It follows that
    $$
( e^{t\Delta_N} \Delta \psi, \phi)  =     (\Delta  \psi, e^{t\Delta_N} \phi )
     \le (\psi, \Delta_N e^{t\Delta_N} \phi)   = (\psi, e^{t\Delta_N}\Delta_N \phi).
  $$
  Hence
  $
   ( e^{t\Delta_N} \Delta \psi, \phi) \le ( \Delta_N e^{t\Delta_N} \psi, \phi)
  $
   for all non-negative $\phi \in C_c^\infty(M^{\text{int}})$.
 This proves \eref{neu61}.
 \end{proof}

          \begin{proposition}\label{propDomination}
Suppose that $M$ is  convex in the sense that the
  second fundamental  form is non-negative on $\p M$.
      Let $T>0$.    Suppose that
$A(\cdot) : [0,T) \rightarrow C^1( M; \Lambda^1\otimes \frak k)$
is a time dependent, 1-form on $M$
which is continuous in the time variable.
 Let $\w(\cdot):[0,T) \rightarrow  C^2( M; \Lambda^p\otimes \frak k)$
 be a time dependent, $\frak k$ valued, $p$-form on $M$
 which is continuously differentiable in the time variable  and
  satisfies the equation
 \beq
\w'(s, x) = \sum_{j=1}^n (\nabla_j^{A(s)} )^2 \w(s,x) + h(s,x),   \label{5.51}
\eeq
where  $h \in C([0,T)\times M; \Lambda^p\otimes \frak k)$.
Assume also that $\w$ satisfies either the boundary conditions
\beq
\w(s)_{norm} =0, \ \ \text{and}\ \  (d_{A(s)} \w(s))_{norm} =0
             \ \ \text{for all}\ \ s\in [0,T)                                   \label{5.53}
\eeq
or
\beq
\w(s)_{tan} = 0 \  \ \text{and} \ \ (d_{A(s)}^* \w(s))_{tan} = 0
       \  \ \text{for all}\ \ s\in [0,T).                                         \label{5.54}
\eeq
Then, for all $(t,x) \in [0,T)\times M$, there holds
\beq
|\w(t, x)| \le \{e^{t\Delta_N} |\w(0)|\}(x)
      + \int_0^t \{e^{(t-s)\Delta_N}  |h(s)|\}(x) ds .          \label{5.55}
\eeq
Here the norm denotes $|\cdot |_{\Lambda^p\otimes \frak k}$.
\end{proposition}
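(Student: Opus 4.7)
The plan is to apply a Kato-type regularization to the scalar field $|\omega|$ and then feed the resulting differential inequality into Lemma \ref{lemNdom1} via a Duhamel argument. Concretely, for $\epsilon >0$ set
\beq
u_\epsilon(s,x) = \bigl(|\omega(s,x)|^2 + \epsilon\bigr)^{1/2}, \notag
\eeq
so that $u_\epsilon$ is smooth in $s$ and $C^2$ in $x$, and decreases monotonically to $|\omega|$ as $\epsilon \downarrow 0$.

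First I would compute $(\p_s - \Delta) u_\epsilon$. Using $\p_s|\omega|^2 = 2\<\omega', \omega\>$ together with \eref{5.51}, and using that $\p_j|\omega|^2 = 2\<\n_j^{A(s)}\omega, \omega\>$ (since $ad\, A_j$ is skew on $\frak k$), a direct calculation on the flat manifold $M\subset \R^n$ gives
\begin{align*}
\p_s u_\epsilon - \Delta u_\epsilon
& = \frac{1}{u_\epsilon}\Bigl[\<h,\omega\> - \sum_j |\n_j^{A(s)}\omega|^2\Bigr]
+ \frac{\sum_j \<\n_j^{A(s)}\omega, \omega\>^2}{u_\epsilon^3}.
\end{align*}
By Cauchy-Schwarz, $\<\n_j^{A(s)}\omega,\omega\>^2 \le |\n_j^{A(s)}\omega|^2 |\omega|^2 \le |\n_j^{A(s)}\omega|^2 u_\epsilon^2$, so the last two terms combine to a non-positive quantity. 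Hence
\beq
\p_s u_\epsilon - \Delta u_\epsilon \le \frac{\<h,\omega\>}{u_\epsilon} \le |h(s,x)|. \notag
\eeq
At the boundary, Corollary \ref{corsubneu} (which is where the convexity of $\p M$ is used, together with either boundary condition \eref{5.53} or \eref{5.54}) gives $\n_{\bf n}|\omega|^2 \le 0$, whence $\n_{\bf n} u_\epsilon = (2 u_\epsilon)^{-1}\n_{\bf n}|\omega|^2 \le 0$ on $\p M$ for each $s \in [0,T)$.

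Now I would run the Duhamel-type comparison. Fix $t \in (0,T)$ and define
\beq
F(s) = e^{(t-s)\Delta_N} u_\epsilon(s), \qquad 0 \le s \le t. \notag
\eeq
Differentiating and using that $e^{(t-s)\Delta_N}$ is positivity-preserving together with the pointwise inequality $\p_s u_\epsilon \le \Delta u_\epsilon + |h(s)|$, I get
\beq
F'(s) \le -\Delta_N e^{(t-s)\Delta_N} u_\epsilon(s) + e^{(t-s)\Delta_N}\Delta u_\epsilon(s) + e^{(t-s)\Delta_N}|h(s)|. \notag
\eeq
Since $u_\epsilon(s)$ satisfies $\n_{\bf n} u_\epsilon(s) \le 0$, Lemma \ref{lemNdom1} applies and the first two terms cancel to a non-positive contribution, so $F'(s) \le e^{(t-s)\Delta_N}|h(s)|$. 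Integrating from $0$ to $t$ yields
\beq
u_\epsilon(t,x) \le \{e^{t\Delta_N} u_\epsilon(0)\}(x) + \int_0^t \{e^{(t-s)\Delta_N}|h(s)|\}(x)\, ds, \notag
\eeq
and letting $\epsilon \downarrow 0$ gives \eref{5.55} by dominated (or monotone) convergence, using $u_\epsilon(0) \downarrow |\omega(0)|$.

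The main obstacle is making the intermediate step rigorous at the level of regularity actually assumed in the hypothesis: $u_\epsilon$ is $C^2$ in $x$ and $C^1$ in $s$, but Lemma \ref{lemNdom1} is stated for $\psi \in W_2(M)$ with an $L^2$ boundary trace condition, and one must know that $u_\epsilon(s,\cdot) \in W_2(M)$ with non-positive normal derivative uniformly enough in $s$ to differentiate $F(s)$ in the $L^2$ sense. Given the assumed joint continuity of $\omega$, $\n\omega$ and $h$ on $[0,T)\times M$, this is a compactness/continuity verification rather than a genuine difficulty; the essential analytic content is concentrated in the Kato-type cancellation producing $\p_s u_\epsilon - \Delta u_\epsilon \le |h|$ and in the boundary-sign statement of Corollary \ref{corsubneu}.
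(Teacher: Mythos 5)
Your proposal is correct and follows essentially the same route as the paper's proof: Kato regularization $u_\epsilon=(|\omega|^2+\epsilon)^{1/2}$, the pointwise parabolic inequality $\p_s u_\epsilon - \Delta u_\epsilon \le |h|$ via Cauchy--Schwarz, Corollary \ref{corsubneu} for the boundary sign, Lemma \ref{lemNdom1} together with positivity preservation of $e^{t\Delta_N}$ in a Duhamel comparison, and $\epsilon\downarrow 0$ by dominated convergence. The only cosmetic differences are that the paper writes $\epsilon^2$ in the regularizer and subtracts the constant $\epsilon$ to work with $\phi=\psi-\epsilon\ge 0$ before passing to the limit; neither affects the substance.
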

        \begin{proof}
Given $\w$ as specified, let $\epsilon >0$ and  define
\beq
 \psi(s,x) = ( |\w(s,x)|^2 + \epsilon^2)^{1/2}.        \label{5.41}
 \eeq
      For fixed $s$ (and suppressing $s$) we assert that
 \beq
 \<\sum_{j=1}^n (\nabla_j^A)^2 \w(x), \w(x)\>
          \le  \psi(x) (\Delta \psi)(x)\  \text{for all}\ x \in M^{int}.   \label{5.43}
 \eeq
 The proof of this well known pointwise inequality  follows a standard
  pattern and does not depend on the boundary conditions.
  Thus for any real valued function $\psi \in C^2(M)$ one verifies easily
   the identity
   \beq
\psi(x) \Delta \psi (x)
     = (1/2) \Delta \psi^2(x) - |grad\, \psi(x)|^2,\   \notag
\eeq
   and then, with $\psi$ defined now by \eref{5.41},  one computes, for each $s$,  that
 \beq
(1/2) \Delta \psi^2(x) = (1/2)\Delta |\w(x)|^2
    = \< \sum_{j=1}^n (\nabla_j^A)^2 \w(x), \w(x) \>
           + \sum_{j=1}^n |\nabla_j^A \w(x)|^2 .    \notag
\eeq
But
 $
 |\p_j \psi(x)| = |(1/2)(\p_j |\w(x)|^2)|/\psi(x)
= \<\n_j^A \w(x), \w(x)\>/\psi(x) \le |\n_j^A \w(x)|.
$
Combining this with  the previous two equations
 yields \eref{5.43}.

Suppose now that $\w(s,x)$ satisfies the differential equation \eref{5.51}.
Take the pointwise inner product of \eref{5.51} with $\w(s, x)$ to find,
with the help of \eref{5.43},
\begin{align*}
\psi(s, x) \psi'(s,x) &= (1/2) (d/ds) \psi^2(s, x)\\
&  =  \< \w'(s,x), \w(s,x)\>\\
& = \<\sum_{j=1}^3 (\nabla_j^{A(s)} )^2 \w(s,x) + h(s,x), \w(s,x)\>\\
&\le \psi(s,x) \Delta \psi(s, x) + |h(s,x)| |\w(s,x)|.
\end{align*}
Divide by $\psi(s,x)$ to deduce
\beq
 \psi'(s,x) \le \Delta \psi(s,x) +|h(s,x)|  \   \text{for all}\ x \in M^{int}. \label{5.53a}
\eeq
In view of \eref{5.53} and \eref{5.54}, it follows from
 Corollary \ref{corsubneu} that
 $\nabla_{\bf n} \psi^2= \nabla_{\bf n}  |\w|^2 \le 0$ on $\p M$.
 And, since $\psi \ge \epsilon >0$ on $M$, it follows that
 \beq
 \nabla_{\bf n} \psi \le 0\ \ \text{on}\ \ \p M.                \label{5.53b}
 \eeq
Let $\phi(s,x) = \psi(s,x) - \epsilon$. Then \eref{5.53a}  and \eref{5.53b} hold
 also with $\psi$
replaced by $\phi$. Thus
$\phi'(s,x) - \Delta \phi(s,x) - |h(s, x)| \le 0$ for each $x \in M^{int}$ and moreover
 $\nabla_n \phi \le 0$ on $\p M$.
 For $0 \le s \le t <T$ define at each $x\in M$ (and suppressing $x$)
$$
u(s) = e^{(t-s)\Delta_N} \phi(s)
     + \int_s^t e^{(t-\sigma)\Delta_N} |h(\sigma)| d \sigma.
 $$
 Then, by virtue of  Lemma \ref{lemNdom1},
\begin{align*}
 (d/ds)u(s) &=  -\Delta_N e^{(t-s)\Delta_N}\phi(s)
                      +e^{(t-s)\Delta_N} \{ \phi'(s) - |h(s)|\}\\
  & \le  e^{(t-s)\Delta_N}\{ -\Delta \phi(s) +  \phi'(s) - |h(s)|\}\\
  & \le 0,
 \end{align*}
  wherein we have used once more the fact that $e^{t\Delta_N}$ is positivity preserving
  for $t\ge 0$.  Thus $u(t) \le u(0)$. That is,
 \beq
 \phi(t)
 \le  e^{t \Delta_N} \phi(0)
       +\int_0^t e^{(t-\sigma)\Delta_N} |h(\sigma)| d \sigma.       \label{5.66}
 \eeq
 Now observe that $ 0 \le \phi(t,x) \le |\w(t,x)|$ and
    $\lim_{\epsilon\downarrow 0} \phi(s,x) = |\w(s,x)|$ for all $s$ and $x$.
  Using the dominated convergence theorem on the first term on the right of \eref{5.66} (which is an integral of a heat kernel), we may now let $\epsilon\downarrow 0$ in \eref{5.66} to arrive at \eref{5.55}.\qed
\end{proof}

      \begin{Remark} \label{remHunSim} {\rm If, in Proposition \ref{propDomination},
 one assumes that $A$ is independent of time and that $h \equiv 0$ then the inequality
 \eref{5.55} asserts that
 \beq
 |e^{t\Delta^A} \w(0) | \le e^{t\Delta_N} |\w(0)|,                \label{5.68}
 \eeq
 where $\Delta^A$ is the gauge covariant Laplacian on $\frak k$ valued
$p$-forms ($p \ge 1$)
associated to either relative or absolute boundary conditions and
$\Delta_N$ is the Neumann Laplacian on real valued functions.
This is a diamagnetic inequality (see \cite[Section 1.3]{CFKS}) for a region with boundary.
It seems 	quite feasible to derive our results from such an inequality
by writing the time dependent propagator as a limit of short time propagators
for $A(t)$ with different $t$. This would entail some regularity on the $t$
dependence of $A(t, \cdot)$. We have not explored this approach.
For a recent paper extending and reviewing  diamagnetic inequalities
of the form \eref{5.68} when $\frak k$ is abelian  see \cite{HS}.
}
\end{Remark}

\subsection{Pointwise bounds on solutions} \label{secNdom3}

Henceforth $M$ will denote the closure of a bounded open set in $\R^3$
with smooth boundary. We will assume  $M$ to be convex in the sense
that its second fundamental form is everywhere non-negative. For the Neumann
heat operator $e^{t\Delta_N}$ over $M$, the constant
\beq
c_N = \sup_{0<t \le 1} t^{3/4} \|e^{t\Delta_N}\|_{2\rightarrow \infty}  \label{AAhk}
\eeq
is  finite, \cite[page 274]{Tay3}. As in \cite{CG1}, we will take
$c:= \sup\{ |\,[\xi, \eta]\,|_\kf: |\xi|_\kf \le 1, |\eta|_\kf \le 1\}$ as a measure of the non-commutativity of  $\kf$.

        \begin{theorem}\label{thmAA} There exist strictly positive
 constants $a$ and $\gamma$ such that
        for any number $\tau \in (0, 1/2]$ and any smooth solution $A(\cdot)$ to
  the Yang-Mills heat equation \eref{ymh10} over the interval $[0,\infty)$
   satisfying either   Neumann boundary conditions \eref{N1} and \eref{N2},
   or Marini boundary conditions \eref{M1},
 or Dirichlet boundary conditions \eref{D1} and \eref{D2},
  the inequality
 \beq
 (2\tau)^{1/4} c\|B_0\|_2  \le a\ \ \        \label{AA0}
 \eeq
 implies that
 \begin{align}
\|B(t)\|_\infty &\le 2c_N \|B_0\|_2    t^{-3/4},\ \ \text{for} \ \      0 <t  \le 2\tau,\label{AA1} \\
\|B(t)\|_\infty &\le 2c_N \|B_0\|_2 \tau^{-3/4}, \ \  \text{for} \ \   \tau \le t <\infty    \ \ \text{and} \label{AA2}\\
\|A'(t)\|_\infty & \le \gamma \|A'(0)\|_2 t^{-3/4}, \ \  \text{for} \ \   0 < t \le  2\tau .       \label{AA3}
\end{align}
In particular $A(\cdot)$ is a locally bounded strong solution.
Moreover
\begin{align}
&\tau^{5/4} \|A'(t)\|_\infty  \le \gamma \|B_0\|_2, \ \text{for} \ \ 2\tau \le t <\infty \ \ \text{and}    \label{AA4} \\
&\|A'(t)\|_\infty \to 0 \ \text{as}\ t \to \infty .                      \label{AA5}
\end{align}
\end{theorem}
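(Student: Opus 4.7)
First I would cast $B$ and $A'$ as solutions of covariant heat equations of the form required by Proposition~\ref{propDomination}. Starting from $A'=-d_A^*B$ and the Bianchi identity $d_AB=0$, one gets $B'=-(d_Ad_A^*+d_A^*d_A)B$, and the Weitzenb\"ock formula rewrites this as $B'=\sum_j(\nabla_j^A)^2B+h_B$ with $|h_B(s,x)|\le c_1|B(s,x)|+c_2c|B(s,x)|^2$ (linear part from Ricci, quadratic from the $[B,B]$ commutator). Differentiating the heat equation in $s$ and applying Weitzenb\"ock again yields $A''=\sum_j(\nabla_j^{A(s)})^2A'+h_{A'}$ with $|h_{A'}(s,x)|\le c_3|A'|+c_4c|B|\,|A'|$.

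Next I would verify that for each of the three boundary regimes $B$ and $A'$ satisfy one of \eref{5.53}, \eref{5.54}. For $B$ under Neumann, $B_{norm}=0$ together with Bianchi $(d_AB)_{norm}=0$ gives \eref{5.53}; under Dirichlet, $B_{tan}=0$ and $(d_A^*B)_{tan}=-A'_{tan}=0$ (by differentiating $A_{tan}\equiv0$) gives \eref{5.54}; Marini reduces again to \eref{5.53} by Bianchi. For $A'$ under Neumann, differentiating the boundary data in $t$ gives $A'_{norm}=0$ and $(d_AA')_{norm}=B'_{norm}=0$, placing $A'$ in \eref{5.53}; Dirichlet is analogous; for Marini one uses the identity $A'=-d_A^*B$ to transfer bounds from $B$. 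With these verified, Proposition~\ref{propDomination} yields
\[
|B(t,x)|\le\bigl\{e^{t\Delta_N}|B_0|\bigr\}(x)+\int_0^t\bigl\{e^{(t-s)\Delta_N}|h_B(s)|\bigr\}(x)\,ds
\]
and the analogous inequality for $|A'(t,x)|$.

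The central step is closing the nonlinear bootstrap that converts this into \eref{AA1}. I would use (i) the short-time bound $\|e^{t\Delta_N}\|_{2\to\infty}\le c_Nt^{-3/4}$ from \eref{AAhk}; (ii) the energy identity $(d/ds)\tfrac12\|B(s)\|_2^2=-\|A'(s)\|_2^2\le0$, giving $\|B(s)\|_2\le\|B_0\|_2$; and (iii) the interpolation $\|B(s)\|_4^2\le\|B(s)\|_\infty\|B(s)\|_2$. Applying $\|e^{(t-s)\Delta_N}\|_{2\to\infty}\le c_N(t-s)^{-3/4}$ to $\||B(s)|^2\|_2=\|B(s)\|_4^2$ and using the Beta-function identity $\int_0^t(t-s)^{-3/4}s^{-3/4}\,ds=c\,t^{-1/2}$ leads to
\[
t^{3/4}\|B(t)\|_\infty\le c_N\|B_0\|_2+Kc\|B_0\|_2\,t^{1/4}\sup_{0<s\le t}\bigl(s^{3/4}\|B(s)\|_\infty\bigr)
\]
for an absolute constant $K$. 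Choosing the constant $a$ in \eref{AA0} so small that $K(2\tau)^{1/4}c\|B_0\|_2\le\tfrac12$, a continuity argument in $t$ then yields \eref{AA1}. This bootstrap, together with the absorption of the linear curvature term in $h_B$ into a Gronwall factor, is the main technical difficulty.

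Bound \eref{AA2} follows from \eref{AA1} by a time shift: applying the same estimate from initial time $t-\tau$, and using $\|B(t-\tau)\|_2\le\|B_0\|_2$ to preserve the smallness hypothesis, gives $\|B(t)\|_\infty\le 2c_N\|B_0\|_2\tau^{-3/4}$. Bound \eref{AA3} is obtained analogously from the domination inequality for $A'$, with the forcing $|h_{A'}|\le c_4c\,\|B(s)\|_\infty|A'(s)|$ now integrable in $s^{-3/4}$ thanks to \eref{AA1}, so the same bootstrap closes; \eref{AA4} is the corresponding time shift. For the decay \eref{AA5}, integrating the energy identity produces $\int_0^\infty\|A'(s)\|_2^2\,ds\le\tfrac12\|B_0\|_2^2$; picking a sequence $s_n\to\infty$ with $\|A'(s_n)\|_2\to0$, applying \eref{AA3} from initial time $s_n$ controls $\|A'(s_n+\tau)\|_\infty$, and a monotonicity/propagation inequality for $\|A'(t)\|_\infty$ on $[s_n,\infty)$ forces $\|A'(t)\|_\infty\to 0$.
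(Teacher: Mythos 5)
Your overall framework—differential identities, Neumann domination, and a nonlinear bootstrap—is the right one and matches the paper's proof of \eref{AA1}, \eref{AA2}, and essentially \eref{AA3}.  (One small point: since $M\subset\R^3$ is flat and the bundle is trivial, the Weitzenb\"ock remainder is purely quadratic, $h_B = B\#B$, with no linear ``Ricci'' term; your $c_1|B|$ is spurious, though harmless.  For \eref{AA3} be aware that the forcing estimate requires $\|A'(s)\|_2$, not $\|A'(s)\|_\infty$, so what is actually needed is an a priori $L^2$ Gronwall bound of the form $\|A'(t)\|_2\le e^{C\|B_0\|_2 t^{1/4}}\|A'(0)\|_2$, established before the semigroup step rather than by ``the same bootstrap.'')

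The genuine gap is in \eref{AA4} and \eref{AA5}.  For \eref{AA4}, a time shift of \eref{AA3} from an origin $\sigma\ge0$ gives $(t-\sigma)^{3/4}\|A'(t)\|_\infty\le\gamma\|A'(\sigma)\|_2$, which has $\|A'(\sigma)\|_2$ on the right, not $\|B_0\|_2$.  Unlike $\|B(\cdot)\|_2$, the function $\|A'(\cdot)\|_2$ is not monotone and need not be controlled by $\|B_0\|_2$ at any particular $\sigma$, so the ``corresponding time shift'' does not yield \eref{AA4}.  The missing step is to square the shifted inequality, integrate over $\sigma\in[t-2\tau,t-\tau]$, and invoke the energy dissipation $\int_0^\infty\|A'(\sigma)\|_2^2\,d\sigma\le\|B_0\|_2^2$; this produces $\tau^{5/2}\|A'(t)\|_\infty^2\le\gamma^2\int_{t-2\tau}^{t-\tau}\|A'(\sigma)\|_2^2\,d\sigma\le\gamma^2\|B_0\|_2^2$, which is \eref{AA4}.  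The same averaging gives \eref{AA5} for free, since the tail integral $\int_{t-2\tau}^{t-\tau}\|A'(\sigma)\|_2^2\,d\sigma\to 0$ as $t\to\infty$.  Your proposed route for \eref{AA5}—picking $s_n$ with $\|A'(s_n)\|_2\to 0$ and then appealing to a ``monotonicity/propagation inequality'' for $\|A'(t)\|_\infty$ on $[s_n,\infty)$—does not go through: $\|A'(t)\|_\infty$ is not monotone, and the forcing in the domination inequality is of size $c\|B(t)\|_\infty\|A'(t)\|_\infty$ with $\|B(t)\|_\infty$ bounded but not going to zero, so any Gronwall closure from $s_n$ onward would give at best growth, not decay.
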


The proof depends on the following lemma.

\begin{lemma}\label{lemAA1} $($Differential identities$)$ For a smooth solution to
\eref{ymh10} there hold
\beq
d B(t)/dt = \sum_{j=1}^3 (\nabla_j^{A(t)})^2 B(t) + B(t)\# B(t) \ \ \text{and} \label{AA11}
\eeq
\beq
(d/dt) A'(t) =\sum_{j=1}^3 (\nabla_j^{A(t)})^2 A'(t)
+ B(t)\# A'(t) - [A'(t)\lrc B(t)],                                 \label{AA12}
\eeq
where $\#$ denotes a pointwise product of forms arising from the Bochner-Weitzenboch formula.
\end{lemma}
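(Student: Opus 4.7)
Both identities follow from differentiating in $t$ and applying a Bochner--Weitzenbock formula for the gauge-covariant rough Laplacian $\sum_j(\nabla_j^{A})^2$. Since $M$ is a flat subset of $\R^3$, the Ricci contribution in the Weitzenbock formula vanishes and only the connection-curvature term $\text{ad}\,B$ survives; this algebraic term is what the statement denotes $B \# (\cdot)$.

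\emph{Identity (\ref{AA11}).} Differentiating (\ref{ymh3}) in $t$ gives $B'(t) = d A'(t) + [A'(t) \wedge A(t)] = d_{A(t)} A'(t)$, so the Yang--Mills equation (\ref{ymh10}) yields $B'(t) = -d_A d_A^* B$. The Bianchi identity $d_A B = 0$ permits me to add the vanishing term $d_A^* d_A B$, producing minus the full gauge-covariant Hodge Laplacian applied to the 2-form $B$. Bochner--Weitzenbock on 2-forms over flat $\R^3$ then converts this into $\sum_j(\nabla_j^A)^2 B + B \# B$.

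\emph{Identity (\ref{AA12}).} Differentiate (\ref{ymh10}) by the product rule. Since $d_{A(t)}^* = d^* + (\text{ad}\,A(t)\wedge)^*$, the $t$-derivative of the operator is $(\text{ad}\,A'(t)\wedge)^*$, and applied to the 2-form $B$ this is (up to sign) interior multiplication, contributing $-[A'(t)\lrc B(t)]$. The remaining piece is $-d_A^* B'(t) = -d_A^* d_A A'(t)$, using the formula for $B'$ just established. To promote $d_A^* d_A$ to a full Hodge Laplacian, I show that $d_A d_A^* A' = 0$ because $d_A^* A' = -(d_A^*)^2 B$ vanishes by the pointwise algebraic identity
\begin{equation*}
(d_A^*)^2 B = \sum_{i,j} \nabla_j^A \nabla_i^A B_{ij} = \frac{1}{2}\sum_{i,j}[\nabla_j^A,\nabla_i^A] B_{ij} = -\frac{1}{2}\sum_{i,j}[B_{ij},B_{ij}] = 0,
\end{equation*}
where I used antisymmetry of $B_{ij}$ to symmetrize the double sum and the curvature identity $[\nabla_j^A,\nabla_i^A] = -\text{ad}\,B_{ij}$. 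Bochner--Weitzenbock on the 1-form $A'$ over flat $\R^3$ then produces $\sum_j(\nabla_j^A)^2 A' + B \# A'$, and combining with the $-[A'\lrc B]$ term gives (\ref{AA12}).

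\emph{Main obstacle.} The principal difficulty is purely algebraic bookkeeping: fixing signs and conventions so that the Weitzenbock curvature contributions on 2-forms and on 1-forms are both recognized as the single pointwise bilinear operation $\#$ of the statement, and distinguishing this $B \# A'$ term (built into Weitzenbock as a commutator with $\text{ad}\,B$) from the $[A'\lrc B]$ term (which arises from differentiating the coefficient $A(t)$ inside $d_{A(t)}^*$, not from Weitzenbock at all). Once the conventions are pinned down, each identity reduces to a short direct calculation.
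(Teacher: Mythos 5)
Your proposal is correct and follows essentially the same route as the paper: differentiate \eqref{ymh3} and \eqref{ymh10}, use Bianchi to supply the missing half of the gauge-covariant Hodge Laplacian for \eqref{AA11}, use $d_{A}^* A' = -(d_A^*)^2 B = 0$ to do the same for \eqref{AA12}, and invoke Bochner--Weitzenb\"ock (with vanishing Ricci on flat $\R^3$) to pass to the rough Laplacian plus a $\#$ term. The only difference is cosmetic: where the paper simply asserts $(d_{A}^*)^2 B = 0$ (elsewhere citing the analogous identity from \cite{CG1}), you prove it directly via the antisymmetrization argument $\sum_{i,j}\nabla_j^A\nabla_i^A B_{ij}=\tfrac12\sum_{i,j}[\nabla_j^A,\nabla_i^A]B_{ij}=0$, which is a welcome bit of extra detail.
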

\begin{proof}
Bianchi's identity and the Bochner-Weitzenbock formula   yield
 \begin{align*}
 B'(t)& = d_{A(t)} A'(t) \\
      & = -( d_{A(t)}  d_{A(t)}^* +  d_{A(t)}^*  d_{A(t)})B(t) \\
      & = \sum_{j=1}^3 (\nabla_j^{A(t)})^2 B(t) + B(t)\# B(t),
      \end{align*}
      which is \eref{AA11}.
Differentiating \eref{ymh10}
 with respect to $t$ gives
\begin{align*}
A''(t) &= - d_{A(t)}^*B'(t) - [ A'(t) \lrc B(t)] \\
  &= -d_{A(t)}^*d_{A(t)} A'(t) - [ A'(t) \lrc B(t)].
  \end{align*}
  Since $ d_{A(t)}^*A'(t) = -d_{A(t)}^* d_{A(t)}^* B(t) = 0$ we find
  \begin{align*}
  A''(t) &= -(d_{A(t)}^*d_{A(t)}  +d_{A(t)} d_{A(t)}^*)A'(t) - [A'(t) \lrc B(t)]\\
   & = \sum_{j=1}^3 (\nabla_j^{A(t)})^2 A'(t) + B(t)\# A'(t) - [ A'(t)\lrc B(t)],
   \end{align*}
   which is \eref{AA12}.
\end{proof}

\bigskip
\noindent
\begin{proof}[Proof of Theorem \ref{thmAA}]
Both of the equations \eref{AA11} and \eref{AA12} have the form
specified in \eref{5.51} with different choices of the form $\w$ and the
function $h$. We need to verify the boundary conditions \eref{5.53}
or \eref{5.54} in each case.

First choose $\w(s,x) = B(s,x)$ and $h(s, x) = B(s,x)\# B(s,x)$.

    If $A(\cdot)$ satisfies Marini boundary conditions,    then
  $\w_{norm} = B_{norm} = 0$ by \eref{N2}, while $d_{A(t)} B(t) = 0$ by the Bianchi
  identity. So \eref{5.53} holds if  $A(\cdot)$ satisfies Marini boundary conditions.
   Since Neumann boundary conditions
  are a special case of Marini boundary conditions, \eref{5.53} holds in that case also.
        If $A(\cdot)$ satisfies Dirichlet boundary conditions then $\w_{tan} = B_{tan} = 0$
  by \eref{D2},
  while $(d_{A(t)}^* \w(t))_{tan} = (d_{A(t)}^* B(t))_{tan} = -A'(t)_{tan} =0$
  by \eref{ymh10} and \eref{D1}. So \eref{5.54} holds for Dirichlet boundary conditions also.
        In either case we may therefore apply Proposition \ref{propDomination}
        to the choice $\w= B$.
The inequality \eref{5.55}  then gives the following pointwise inequality
\begin{align}
|B(t, x)| &\le \{e^{t\Delta_N} |B(0)|\}(x)
      + \int_0^t \{e^{(t-s)\Delta_N} |B(s) \# B(s)|\}(x) ds.   \label{AA30}
      \end{align}
By \eref{AAhk},
\begin{align}
\|B(t)\|_\infty &\le     \| e^{t\Delta_N} |B_0| \|_\infty
      + \int_0^t \|e^{(t-s)\Delta_N} c |B(s)|^2\|_\infty ds    \notag\\
      &\le c_N \{ t^{-3/4} \|B_0\|_2 + \int_0^t(t-s)^{-3/4} c\| \ |B(s)|^2 \|_2 ds\} \label{5.75}
\end{align}
for  $0 < t \le 1$.
Define
\beq
\beta(t) = \sup_{0\le s \le t} s^{3/4} \|B(s)\|_{L^\infty(M)}.
\eeq
This is finite for each number $t\in (0,\infty)$   because $A(\cdot)$ is smooth.
Then
$
\|B(s)\|_\infty  \le s^{-3/4} \beta(t)   \ \ \text{whenever}\ \ 0 < s \le t,
$
and therefore
$$
\||B(s)|^2\|_2 \le s^{-3/4}\beta(t) \|B(s)\|_2 \le  s^{-3/4}\beta(t) \|B_0\|_2.
$$
Using this to estimate the integrand in \eref{5.75} we find
\begin{align}
\int_0^t(t-s)^{-3/4} c\| \ |B(s)|^2 \|_2 ds
&\le \beta(t) c\| B_0\|_2 \int_0^t (t-s)^{-3/4} s^{-3/4} ds.
\end{align}
The last integral is
$t^{-1/2} \int_0^1 (1-\sigma)^{-3/4} \sigma^{-3/4} d\sigma \equiv t^{-1/2} a_4$
for a constant $a_4$.
Therefore \eref{5.75} yields
\beq
 t^{3/4} \|B(t)\|_\infty \le c_N\Big\{ \|B_0\|_2
     + \beta(t) c \|B_0\|_2 t^{1/4} a_4   \Big\}  \ \text{for}\ 0< t \le 1.   \label{5.79}
      \eeq
 Replace $t$ by $t' <t$ in this inequality and take
the supremum over $t' <t$ to find, using the monotonicity of $\beta(t)t^{1/4}$,
\beq
 \beta(t) \le c_N\|B_0\|_2
     + \beta(t) \{t^{1/4} c \|B_0\|_2  c_N a_4\}   \ \text{for}\ 0< t \le 1.    \label{5.80}
      \eeq
Let $a= (2 c_N a_4)^{-1}$. Then the inequality \eref{AA0} may
be written
   \beq
   (2\tau)^{1/4} c\|B_0\|_2 c_Na_4  \le 1/2 .    \label{5.81}
   \eeq
   Hence, for $0 < t \le 2\tau \le 1$, \eref{5.80} yields
   $\beta(t) \le c_N \|B_0\|_2 +(1/2) \beta(t)$ and thus
   $ \beta(t) \le 2c_N \|B_0\|_{L^2(M)}$ if $ 0 < t \le  2\tau \le 1$.
   This proves the inequality  \eref{AA1}.

    Now \eref{AA2} follows from \eref{AA1} easily thus. Write $R = 2c_N \|B_0\|_2$.
         If $\tau \le t \le 2\tau$ then, from  \eref{AA1}, it follows that
    $\tau^{3/4} \|B(t)\|_\infty \le t^{3/4}  \|B(t)\|_\infty \le  R$,
    which is \eref{AA2} on the interval $[\tau, 2\tau]$.
    We may repeat the previous argument over an interval whose time origin is  $\tau$.
    Since  $\|B(\tau)\|_2 \le \|B_0\|_2$ the definition  \eref{AA0}
    shows that we
    we may take the ``new $\tau$''  to  be the same as the old $\tau$.
     Apply the inequality    \eref{AA1}    over the interval $(\tau, 3\tau]$.
         Taking $t$ in the second half of this interval, i.e. in $[2\tau, 3\tau]$, we find
     $\tau^{3/4} \|B(t)\|_\infty \le (t - \tau)^{3/4} \|B(t)\|_\infty \le 2c_N \|B(\tau)\|_2
    \le 2c_N \|B_0\|_2$,
     which is \eref{AA2} over the interval $[2\tau, 3\tau]$. Proceeding in this way,
     $\tau$ units at a time,  we find that  \eref{AA2} holds  over the whole interval $[\tau, \infty)$.

           Turning to the proof of \eref{AA3}, take $\w(s, x) = A'(s,x)$ in
   Proposition \ref{propDomination}    and take
   $h(s) = B(s)\# A'(s) - [ A'(s)\lrc B(s)]$
 over the interval $[0,2 \tau)$. Once again one needs to verify
  the boundary conditions  \eref{5.53} or \eref{5.54}.

        If $A(\cdot)$ satisfies Marini boundary conditions then
        $\w_{norm} = (A')_{norm} = - (d_{A}^* B)_{norm} = 0$ by \eref{M1}
        and \cite[Equ (3.20)]{CG1}.
             Moreover $(d_A \w)_{norm} = (d_A A')_{norm} = (B')_{norm} =0$.
        Therefore \eref{5.53} holds for $\w = A'$.
         Since Neumann boundary conditions are stronger than
        Marini boundary conditions,  \eref{5.53} holds in that case also.
             If $A(\cdot)$ satisfies Dirichlet boundary conditions then
       $\w_{tan} = (A')_{tan} =0$ by \eref{D1}. Moreover
       $d_A^*\w = d_A^* A' = - (d_A^*)^2 B =0$
       by  \cite[Equ (3.24)]{CG1}.
       In either case we may therefore apply
       Proposition \ref{propDomination}
        to the choice $\w= A'$.
 The inequality \eref{5.55} then gives the estimate
 \begin{align*}
 \|A'(t)&\|_\infty \le \| e^{t\Delta_N} A'(0) \|_\infty
    + \| \int_0^t e^{(t-s) \Delta_N} |B(s) \# A'(s) + A'(s) \lrc B(s)| ds \|_\infty \\
    &\le c_N\Big\{ t^{-3/4} \|A'(0)\|_2 + \int_0^t (t-s)^{-3/4} \|B(s) \# A'(s) + A'(s) \lrc B(s)\|_2 ds\Big\}.
    \end{align*}
  But, using \eref{AA1} combined with Lemma \ref{lemfa37} below, we have
  \begin{align*}
   \|B(s) \# A'(s) - [ A'(s) \lrc B(s)]\|_2 &\le 2c \|B(s)\|_\infty \|A'(s)\|_2 \\
&\le 4c_N s^{-3/4} c\|B_0\|_2 \|A'(s)\|_2 \\
 &\le 4c_N s^{-3/4} c\|B_0\|_2 \|A'(0)\|_2 e^{8c_Nc\|B_0\|_2 t^{1/4}}.
 \end{align*}
 Hence, for $0 < t \le 2\tau$, we find
 \begin{align*}
  t^{3/4}\|A'(t)\|_\infty&\le c_N\Big\{ \|A'(0)\|_2   \\
  &+ t^{3/4}4c_N c\|B_0\|_2 \| A'(0)\|_2 e^{8c_Nc\|B_0\|_2 t^{1/4}}
   \int_0^t (t-s)^{-3/4} s^{-3/4} ds \Big\} \\
   & = c_N\Big\{ \|A'(0)\|_2
   +4c_N c\|B_0\|_2 \| A'(0)\|_2 e^{8c_Nc\|B_0\|_2 t^{1/4}} t^{1/4} a_4\Big\}\\
   &\le \|A'(0)\|_2\Big( c_N
   +  4c_N^2 \{(2\tau)^{1/4} c \|B_0\|_2\}
                                       e^{8c_N \{c \|B_0\|_2 (2\tau)^{1/4}\}}  a_4\Big) \\
   &\le \|A'(0)\|_2 \Big( c_N
   +  4c_N^2 a e^{8c_N a}  a_4\Big).
  \end{align*}
  This proves \eref{AA3} with $\gamma = c_N
   +  4c_N^2 a e^{8c_N a}  a_4$.

We may apply \eref{AA3} beginning at time $\sigma \ge 0$ instead of time zero to find
\beq
(t-\sigma)^{3/4} \|A'(t)\|_\infty
\le \gamma \|A'(\sigma)\|_2\ \ \text{if}\ \ \sigma < t \le \sigma + 2\tau.
\eeq
In particular, if $ \sigma + \tau \le t$ then $(t-\sigma)^{3/4} \ge \tau^{3/4}$ and therefore
\beq
\tau^{3/4} \|A'(t)\|_\infty \le \gamma  \|A'(\sigma)\|_2\ \ \text{if}\ \ t -2\tau
                     \le \sigma \le t -\tau .                   \label{AA34}
\eeq
Keeping $t$ fixed and integrating the square of this inequality over the interval
$t -2\tau \le \sigma \le t -\tau$ we find
\beq
\tau \tau^{3/2} \|A'(t)\|_\infty^2
 \le \gamma^2 \int_{t-2\tau}^{t-\tau} \|A'(\sigma)\|_2^2 d\sigma. \label{AA35}
\eeq
In view of the bound $\int_0^\infty \|A'(\sigma)\|_2^2 d\sigma \le \|B_0\|_2^2$,
established in \cite[Equ (6.5)]{CG1},
the bound \eref{AA4} follows and at the same time
the integrability over $[0, \infty)$ of the integrand on the right of
\eref{AA35} proves \eref{AA5}.
  \end{proof}

          \begin{lemma}\label{lemfa37} Let $\psi_\infty (t) = 2c \int_0^t \|B(s)\|_\infty ds$.
     Then
\beq
\|A'(t)\|_2^2 + 2\int_0^t e^{\psi_\infty(t)-\psi_\infty (s)} \|B'(s)\|_2^2 ds
                    \le e^{\psi_\infty(t)} \|A'(0)\|_2^2.                        \label{A33}
\eeq
In particular, if \eref{AA1} holds, then
\beq
\|A'(t)\|_2 \le e^{8c_Nc \|B_0\|_2 t^{1/4}} \| A'(0)\|_2  \ \
                                            \text{for}\ \ 0 <t \le 2\tau \le 1.     \label{AA35}
\eeq

\end{lemma}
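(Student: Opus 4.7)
The approach is a Gronwall-type argument applied to $u(t):=\|A'(t)\|_2^2$, using the second-order evolution equation
$$A''(t) = -d_{A(t)}^* d_{A(t)} A'(t) - [A'(t)\lrc B(t)]$$
derived inside the proof of Lemma \ref{lemAA1}, combined with the identity $d_{A(t)} A'(t) = B'(t)$ from Bianchi.

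First I would compute $u'(t) = 2\langle A''(t), A'(t)\rangle$. The principal term yields, after integration by parts, $-2\|d_{A(t)} A'(t)\|_2^2 = -2\|B'(t)\|_2^2$; the commutator term is bounded pointwise by $|\langle [A'\lrc B], A'\rangle|\le c\|B(t)\|_\infty |A'(t)|^2$ using the definition of $c$, so that, after integrating over $M$ and inserting $\psi_\infty'(t) = 2c\|B(t)\|_\infty$, one obtains the differential inequality
$$u'(t) + 2\|B'(t)\|_2^2 \le \psi_\infty'(t)\,u(t).$$
Multiplying by the integrating factor $e^{-\psi_\infty(t)}$ gives $(d/dt)[e^{-\psi_\infty(t)} u(t)] \le -2 e^{-\psi_\infty(t)}\|B'(t)\|_2^2$, and integrating from $0$ to $t$ and multiplying through by $e^{\psi_\infty(t)}$ yields \eref{A33}.

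The second assertion then follows by dropping the non-negative integral on the left of \eref{A33} and bounding $\psi_\infty(t)$ using \eref{AA1}: for $0<t\le 2\tau\le 1$,
$$\psi_\infty(t) = 2c\int_0^t \|B(s)\|_\infty\, ds \le 4c\, c_N\|B_0\|_2 \int_0^t s^{-3/4}\,ds = 16\, c\, c_N\|B_0\|_2\, t^{1/4},$$
so that $\|A'(t)\|_2 \le e^{\psi_\infty(t)/2}\|A'(0)\|_2 \le e^{8c_N c \|B_0\|_2 t^{1/4}} \|A'(0)\|_2$.

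The main technical point is justifying the integration by parts $\langle d_{A}^* B', A'\rangle = \|B'\|_2^2$ (with no boundary contribution) for each of the three admissible boundary conditions. This I would handle exactly as in the proof of \eref{AA3} in Theorem \ref{thmAA}: for Marini (and a fortiori Neumann) conditions one has $(A')_{norm} = -(d_A^* B)_{norm}=0$ by \cite[Equ (3.20)]{CG1} together with \eref{M1}, which kills the boundary term; for Dirichlet conditions $(A')_{tan}=0$ by \eref{D1} and $B'_{tan}=0$ follows by differentiating \eref{D2} in $t$, and either suffices. These same boundary identities were already invoked to apply Proposition \ref{propDomination} to $\w=A'$, so no new verifications are required.
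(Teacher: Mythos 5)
Your proposal is correct and follows essentially the same route as the paper: both reduce to the differential inequality $(d/ds)\|A'(s)\|_2^2 + 2\|B'(s)\|_2^2 \le 2c\|B(s)\|_\infty\|A'(s)\|_2^2$ and then apply the integrating factor $e^{-\psi_\infty}$. The only cosmetic difference is that you re-derive the underlying energy identity from the second-order evolution equation $A'' = -d_A^* d_A A' - [A'\lrc B]$ and an integration by parts, whereas the paper simply cites the identity \cite[Equ (5.8)]{CG1}, where that computation (including the boundary-term analysis you outline) was already carried out.
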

     \begin{proof} In the identity (see \cite[Equ (5.8)]{CG1})
$$
(d/ds) \|A'(s)\|_2^2 + 2\|B'(s) \|_2^2  = -2 ([A'(s) \wedge A'(s)], B(s)),
$$
     use the inequality
$ 2| ([A'(s) \wedge A'(s)], B(s))|   \le 2c \|B(s)\|_\infty \|A'(s)\|_2^2   $
 to dominate the last term.
 One arrives at
 $(d/ds) \|A'(s)\|_2^2 + 2\|B'(s)\|_2^2 \le 2c \|B(s)\|_\infty \|A'(s)\|_2^2. $
Hence
\beq
(d/ds) (e^{-\psi_\infty (s)} \|A'(s)\|_2^2) +2 e^{-\psi_\infty (s)} \|B'(s)\|_2^2 \le 0.  \notag
\eeq
Integrate from $0$ to $t$ to get
\beq
e^{-\psi_\infty(t)} \|A'(t)\|_2^2 - \|A'(0)\|_2^2
      +2\int_0^t e^{-\psi_\infty (s)} \|B'(s)\|_2^2 ds \le 0          \notag
\eeq
which gives \eref{A33}.
 Now if \eref{AA1} holds then
 \begin{align*}
 \psi_\infty(t) & \le 2c \int_0^t 2c_N \|B_0\|_2 s^{-3/4} ds
  = 16 cc_N \|B_0\|_2 t^{1/4}.
 \end{align*}
 Using just the first term in \eref{A33} we find therefore that \linebreak
 $\|A'(t)\|_2^2 \le e^{16c_Nc \|B_0\|_2 t^{1/4}} \| A'(0)\|_2^2 $,
 which is \eref{AA35}.
\end{proof}

         \begin{Remark}{\rm  Our proof of uniqueness of solutions to the Yang-Mills heat equation
\eref{ymh10} required   use of the allowed initial singularity of $\|B(t)\|_\infty$
specified in the definition \eref{ymh12} of  ``locally bounded''. As to whether uniqueness
holds without such an assumption, we have not been able to decide.
J. R{\aa}de,  \cite{Ra}, has proven uniqueness of solutions if one defines
a solution to be a limit of smooth solutions. The following corollary shows that
 such a limiting solution is automatically locally bounded and therefore
 our uniqueness  proof applies to such limiting solutions
   when $M$ is a bounded, smooth,   convex subset of $\R^3$.
}
\end{Remark}

                \begin{corollary} \label{corAA2} Suppose that, for some $T\le \infty$,
 $A(\cdot)$ is a strong solution on
$[0,T)$ satisfying Neumann, Dirichlet, or Marini boundary conditions.
Assume that there is a sequence  $A_n$ of smooth solutions on
 $[0,T)$, satisfying  the same boundary conditions as $A$,
  such that  $A_n \rightarrow A$ in the $C_{loc}( [0,T), W_1)$ topology.
 That is,
\beq
\sup_{0 \le t \le t_0} \| A_n(t) - A(t) \|_{W_1} \rightarrow 0 \ \ \text{for each}\ \ t_0 \in (0,T)
\eeq
Then $A(\cdot)$ is locally bounded.
\end{corollary}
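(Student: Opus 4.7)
The plan is to apply Theorem \ref{thmAA} to each smooth approximation $A_n$, obtain the a priori pointwise bounds on $B_n$, and transfer them to the limit $B$ using the $W_1$-convergence $A_n \to A$.

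First I would verify that $B_n(t) \to B(t)$ in $L^2(M)$, uniformly on each compact subinterval of $[0,T)$. This follows because the curvature map $A \mapsto dA + (1/2)[A\wedge A]$ is continuous from $W_1(M)$ to $L^2(M)$ in dimension three: the linear term is trivial, and the quadratic term is handled by the Sobolev embedding $W_1 \hookrightarrow L^6$, which yields $\|[A\wedge A]\|_2 \lesssim \|A\|_{W_1}^2$. Combined with the uniform convergence $\sup_{[0,t_0]}\|A_n(t)-A(t)\|_{W_1}\to 0$, this gives the uniform-in-$t$ convergence of the curvatures in $L^2(M)$.

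Next, fix any $t_0 \in (0,T)$ and choose $\tau \in (0,1/2]$ with $2\tau \le t_0$ and $(2\tau)^{1/4} c\|B(0)\|_2 \le a/2$, where $a$ is the constant from Theorem \ref{thmAA}. Since $\|B_n(0)\|_2 \to \|B(0)\|_2$, the smallness hypothesis \eqref{AA0} holds for $A_n$ once $n$ is large, and Theorem \ref{thmAA} gives
\begin{equation*}
\|B_n(t)\|_\infty \le 2c_N \|B_n(0)\|_2 \, t^{-3/4}, \qquad 0 < t \le 2\tau.
\end{equation*}
For each fixed $t \in (0, 2\tau]$, I would extract a (possibly $t$-dependent) subsequence from the $L^2$ convergence along which $B_{n_k}(t) \to B(t)$ a.e.\ on $M$, and invoke lower semicontinuity of the essential supremum under a.e.\ convergence to conclude
\begin{equation*}
\|B(t)\|_\infty \le \liminf_k \|B_{n_k}(t)\|_\infty \le 2c_N \|B(0)\|_2 \, t^{-3/4}.
\end{equation*}
This yields clause \eqref{ymh12} in the definition of locally bounded.

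For clause \eqref{ymh11} on a general $[a,b) \subset (0,T)$, I would iterate in time exactly as in the proof of \eqref{AA2}: monotonicity of $\|B(\sigma)\|_2$ along the Yang-Mills heat flow lets the same $\tau$ be re-used at each restart, giving $\|B_n(t)\|_\infty \le 2c_N \|B_n(0)\|_2\,\tau^{-3/4}$ on $[\tau,b]$, which then transfers to $B$ by the same subsequence-plus-lower-semicontinuity argument. The main obstacle is purely technical, namely securing convergence strong enough to extract pointwise bounds on the limit; once $L^2$ convergence of the curvatures is in hand, the rest is routine. One minor point to verify is that although Theorem \ref{thmAA} is stated for solutions defined on $[0,\infty)$, its proof only uses the solution on the time interval over which the bounds are being asserted, so it applies verbatim on any $[0,b]\subset[0,T)$.
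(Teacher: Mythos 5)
Your proposal is correct and follows essentially the same route as the paper: apply Theorem \ref{thmAA} to the smooth approximations $A_n$, transfer the resulting $\|B_n(t)\|_\infty$ bounds to $B(t)$ via the $L^2$ convergence $B_n(t)\to B(t)$ implied by $W_1$ convergence, and iterate in time using the monotonicity of $\|B_n(\cdot)\|_2$. The only differences are that you spell out the Sobolev-embedding justification for $B_n\to B$ in $L^2$ and the a.e.-subsequence/lower-semicontinuity step for passing $L^\infty$ bounds to the limit, and you correctly note that Theorem \ref{thmAA} applies on $[0,T)$ even though stated for $[0,\infty)$ — details the paper leaves implicit.
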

          \begin{proof} Each function $A_n$ is clearly a locally bounded strong solution
on $[0,T)$. Since $\|A_n(0)\|_{W_1} $ is uniformly bounded in $n$ there exists
a constant $R >0$ such that  $\|B_n(0)\|_2 \le R$ for all $n$.
 By Theorem  \ref{thmAA} there exists $\tau >0$, depending only on $R$, such that
\beq
t^{3/4} \| B_n(t)\|_\infty \le 2c_N R\  \ \text{for } 0 < t \le 2\tau.
\eeq
For each $t >0$ the $W_1$ convergence of $A_n(t)$ to $A(t)$ implies that
$B_n(t) \rightarrow B(t)$ in $L^2(M)$. Hence
\beq
t^{3/4} \| B(t)\|_\infty \le 2c_N R \  \ \text{for } 0 < t \le 2\tau.
\eeq
The same argument applies on any interval
 $ [\alpha, \alpha  +2\tau] \subset [0,T)$ because $\|B_n(\alpha) \| _2 \le \| B_n(0)\|_2$.
 Therefore $
\| B(t)\|_\infty \le 2c_N R \tau^{-3/4}$ for $\alpha+\tau \le t \le \alpha +2 \tau$. Hence
$\|B(t)\|_\infty$ is bounded on any interval $[b,c] \subset [\tau, T)$.
\end{proof}

\begin{theorem} \label{thmAA3} Suppose that  $A(\cdot)$ is a locally
 bounded strong solution on an interval
$[0,T)$ satisfying Neumann or Dirichlet boundary conditions.
Then  \eref{AA1} and \eref{AA2} hold for a number $\tau$ depending
only on $\|B(0)\|_2$. Moreover if $\|A'(0)\|_2 <\infty$ then \eref{AA3} holds also.
\end{theorem}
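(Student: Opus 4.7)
The strategy is to reduce Theorem \ref{thmAA3} to the already established Theorem \ref{thmAA} by approximating the locally bounded strong solution $A(\cdot)$ by smooth solutions. First I would approximate the initial data $A_0 \in W_1$ by a sequence of smooth $\frak k$-valued 1-forms $A_0^{(n)}$ satisfying the prescribed boundary conditions and converging to $A_0$ in $W_1$-norm. Applying the existence theory of \cite{CG1} one obtains smooth solutions $A_n(\cdot)$ on $[0,T)$, to which Theorem \ref{thmAA} applies directly.

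The crucial intermediate step is to show $A_n(\cdot) \to A(\cdot)$ in $C_{loc}([0,T), W_1)$. For this I would invoke uniqueness of locally bounded strong solutions together with $W_1$-continuous dependence on initial data from \cite{CG1}. This is the converse direction of Corollary \ref{corAA2}, and is available for Neumann and Dirichlet boundary conditions, which is what restricts the hypothesis of Theorem \ref{thmAA3} to those two cases. Since $\|B_n(0)\|_2 \to \|B_0\|_2$, one may choose a single $\tau \in (0, 1/2]$ with $(2\tau)^{1/4} c \|B_0\|_2$ strictly less than $a$ so that the condition \eqref{AA0} eventually holds for all $A_n$ with this common $\tau$. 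Theorem \ref{thmAA} then yields
$$
\|B_n(t)\|_\infty \le 2c_N \|B_n(0)\|_2 \, t^{-3/4} \quad \text{for } 0 < t \le 2\tau,
$$
and the analogous inequality \eqref{AA2} on $[\tau, \infty)$.

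To pass to the limit, fix $t > 0$. The $W_1$-convergence $A_n(t) \to A(t)$ implies $B_n(t) \to B(t)$ in $L^2(M)$. The uniform pointwise bound survives: if $|B(t, x)| > 2c_N \|B_0\|_2 \, t^{-3/4}$ on a set of positive measure, then $\|B_n(t) - B(t)\|_2$ would be bounded below, contradicting $L^2$-convergence. This establishes \eqref{AA1} and \eqref{AA2} for $A$ with $\tau$ depending only on $\|B_0\|_2$. If in addition $\|A'(0)\|_2 < \infty$, a parallel argument yields \eqref{AA3}: arrange the approximants so that $A_n'(0) = - d_{A_0^{(n)}}^* B_n(0) \to -d_{A_0}^* B_0 = A'(0)$ in $L^2$, apply \eqref{AA3} to each $A_n$, and use $L^2$-convergence of $A_n'(t) \to A'(t)$ on $(0, T)$ combined with the same lower-semicontinuity of the $L^\infty$ norm.

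The main obstacle is the continuous-dependence input: one must know that $W_1$-convergence of smooth initial data propagates to $C_{loc}([0,T), W_1)$-convergence of the resulting solutions, and that this limit coincides with $A(\cdot)$. Both facts rest on the uniqueness and stability machinery of \cite{CG1}, which is precisely what is available for Neumann and Dirichlet, but not (at this stage) for Marini boundary conditions; this explains the weaker scope of Theorem \ref{thmAA3} compared to Theorem \ref{thmAA}. Once these ingredients are in place, the estimates transfer from $A_n$ to $A$ by essentially soft arguments.
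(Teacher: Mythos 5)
Your proposal relies on a global continuous-dependence result that the paper does not establish and explicitly works around. You assert that $W_1$-convergence of smooth initial data $A_0^{(n)}\to A_0$ propagates to $C_{loc}([0,T),W_1)$-convergence of the corresponding smooth solutions $A_n(\cdot)$ to $A(\cdot)$. If this were available, then Corollary \ref{corAA2} would immediately give the conclusion and Theorem \ref{thmAA3} would be a triviality. But the only approximation tool at hand from \cite{CG1} is the gauge invariant regularization lemma \cite[Lemma 9.1]{CG1}, which produces smooth approximants only over intervals $[a,b]\subset(0,T_0]$ of length at most some $\epsilon>0$, where $\epsilon$ depends on $\sup_{0\le s\le T_0}\|A(s)\|_{W_1}$ --- a quantity over which we have no control in terms of $\|B(0)\|_2$ alone. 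Your sketch glosses over exactly this difficulty, which the paper flags explicitly as the reason the ``simple argument of Corollary \ref{corAA2}'' must be modified.

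The paper's actual route is more delicate and avoids any global stability claim: it uses \cite[Lemma 9.1]{CG1} to obtain smooth approximants only on short subintervals $[a,b]$ (of length $\le\epsilon$, $a>0$), passes to the limit in the smooth-solution pointwise inequality \eqref{AA30} to get \eqref{AA50} for $A$ on these short intervals, and then invokes the semigroup gluing Lemma \ref{lemAA4} to upgrade \eqref{AA50} from short intervals to arbitrary intervals $[a,b]\subset(0,T_0]$. Only then is the bootstrap argument of Theorem \ref{thmAA} rerun with $[0,t]$ replaced by $[a,t]$, producing the $a$-shifted estimate \eqref{AA53}, and finally $a\downarrow 0$ is taken, using local boundedness of $A$ to guarantee finiteness of $\beta(t)$. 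To salvage your approach you would either need to prove the claimed $C_{loc}([0,T),W_1)$ continuous dependence (not a soft consequence of uniqueness; it would require uniform-in-$n$ a priori control of $\|A_n(t)\|_{W_1}$ on compact time sets) or cite a specific such result from \cite{CG1}; as written, this is the missing step.

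Your observation that Marini boundary conditions are excluded because the regularization machinery of \cite{CG1} is unavailable there is correct, and your pointwise argument for passing bounds through $L^2$-convergence is fine, but these are peripheral to the gap above.
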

        \begin{proof} For a given locally bounded strong solution $A(\cdot)$ we know
from the gauge invariant regularization lemma, \cite[Lemma 9.1]{CG1}
that if $T_0 <T$ then there exists $\epsilon >0$, depending on
 $\gamma \equiv \sup_{0\le s \le T_0}\|A(s)\|_{W_1}$,  such that,
  for any interval $[a,b] \subset (0, T_0]$
 of length at most $\epsilon$,  there is a
 sequence, $A_n$ of
 smooth solutions over $[a,b]$ which approximate $A$ over this interval in the strong
 sense given in \cite[Equ (9.1)]{CG1}.
 We need to modify the simple argument of
 Corollary \ref{corAA2} to take into account the possibility   that $\epsilon$, which
  depends on $\gamma$ and therefore on $\|A(\cdot)\|_{W_1}$ over the interval
  $[0,T_0]$, may be much smaller than the desired number $\tau$,
  which we hope will depend only on $\|B(0)\|_2$.
     To this end we will have to derive \eref{AA30} for non-smooth solutions
      to \eref{ymh10}.
 If $[a,b] \subset (0, T_0]$ is an interval of length at most $\epsilon$ and $A_n$
 denotes the sequence of smooth approximations of $A$ over $[a,b]$, then
 \eref{AA30} shows that
 \begin{align}
|B_n(b, x)| &\le \{e^{(b-a)\Delta_N} |B_n(a)|\}(x)
      + \int_a^b \{e^{(b-s)\Delta_N} |B_n(s) \# B_n(s)|\}(x) ds.   \notag
      \end{align}
Since $B_n$ converges to $B$ uniformly over $[a,b]\times M$ by
\cite[Equ (9.1)]{CG1},  and since
$e^{t\Delta_N}$ is bounded on $L^\infty(M)$, we may pass to the limit in
the last inequality to find
 \begin{align}
|B(b, x)| &\le \{e^{(b-a)\Delta_N} |B(a)|\}(x)
      + \int_a^b \{e^{(b-s)\Delta_N} |B(s) \# B(s)|\}(x) ds   \label{AA50}
      \end{align}
 for any interval $[a,b] \subset (0, T_0]$  of length at most $\epsilon$.
We will show in  Lemma \ref{lemAA4} that the validity of the pointwise inequality
\eref{AA50} over these small intervals implies its validity over large intervals.
Assuming then that \eref{AA50} holds over any interval  $[a,b] \subset (0, T_0]$
we will show that
   the derivation leading from \eref{AA30}   to \eref{5.79} now goes through exactly
   as before, provided we replace  the interval $[0,t]$ by the interval $[a,t]$, with
   the number $a$ necessarily greater
   than zero. Thus, defining
   $\beta_a(t) =\sup_{a\le s \le t}(s-a)^{3/4} \|B(s)\|_\infty$,
   the derivation of \eref{5.79} shows that
   \beq
   (t-a)^{3/4} \|B(t)\|_\infty
  \le  c_N \Big\{ \|B(a)\|_2 +  c \beta_a(t) \|B(a)\|_2 (t-a)^{1/4} a_4\Big\},  \label{AA53}
   \eeq
   for $a \le t \le T_0$. Now fix $t >0$ and let $a\downarrow 0$. Each term in
   \eref{AA53} converges to the corresponding term in \eref{5.79}.  Moreover
   the hypothesis that $A(\cdot)$ is locally bounded shows that
    $\beta(t) < \infty $ for all $t >0$. The remainder of the proof that \eref{AA1}
    holds is now exactly the same as in the proof of Theorem \ref{thmAA}. The proof of
    \eref{AA2} also follows as before.

          The proof of \eref{AA3} is similar: Taking $\w(t) = A'(t)$ in
     Proposition \ref{propDomination},
          one finds  the pointwise bound
 \beq
 |A'(b, x)|
 \le \{e^{(b-a)\Delta_N} |A'(a)|\}(x) +\int_a^b \{e^{(b-s)\Delta_N} |h(s)| \}(x) ds \label{AA54}
 \eeq
 for smooth solutions over  an interval $[a,b]$.        We may apply the gauge
  invariant regularization      lemma, \cite[Lemma 9.1]{CG1},
 to the given locally bounded
  strong solution $A(\cdot)$ and conclude that \eref{AA54} holds for small intervals
  $[a,b] \subset (0, T)$, and therefore, by the next lemma, holds
  for all intervals  $[a,b] \subset (0, T)$. The limiting procedure for letting
   $a\downarrow 0$,
  used in the proof of \eref{AA1},  applies now equally well to the proof of \eref{AA3}.
 \end{proof}

      \begin{lemma}\label{lemAA4} $($Time dependent semigroup inequality.$)$ Let $u(t,x)$ and $g(t,x)$ be non-negative bounded measurable functions on
 $[0,T) \times M$. Let $\epsilon >0$. Suppose that
 \beq
 u(b,x)
 \le \{e^{(b-a)\Delta_N} u(a,\cdot)\}(x) + \int_a^b \{e^{(b-s)\Delta_N}g(s)\}(x) ds
 \ \ \text{for}\ \ a.e.\ x    \label{AA60}
 \eeq
 whenever $ 0 < a < b <T$ and
 \beq
 b-a < \epsilon.                    \label{AA61}
 \eeq
 Then \eref{AA60} holds for all intervals $[a,b] \subset (0, T)$.
 \end{lemma}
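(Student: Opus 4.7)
The plan is to prove the inequality \eqref{AA60} for an arbitrary interval $[a,b] \subset (0,T)$ by iterating the hypothesis over a finite partition of $[a,b]$ and using two structural properties of the Neumann heat semigroup: (i) positivity preservation, which lets us apply $e^{t\Delta_N}$ to both sides of an inequality between non-negative functions, and (ii) the semigroup property $e^{s\Delta_N}e^{r\Delta_N} = e^{(s+r)\Delta_N}$.

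I would argue by induction on the number $n$ of pieces in a subdivision $a = t_0 < t_1 < \cdots < t_n = b$ chosen so that each $t_{i} - t_{i-1} < \epsilon$. The base case $n=1$ is exactly the hypothesis \eqref{AA60}. For the inductive step, suppose the claim already holds on $[a, t_{n-1}]$, so that
\[
u(t_{n-1}, x) \le \{e^{(t_{n-1}-a)\Delta_N} u(a,\cdot)\}(x) + \int_a^{t_{n-1}} \{e^{(t_{n-1}-s)\Delta_N} g(s)\}(x)\, ds \quad \text{a.e.}
\]
Both sides are non-negative (using non-negativity of $u$, $g$ and positivity preservation of the heat semigroup), so we may apply $e^{(t_n - t_{n-1})\Delta_N}$ to both sides and preserve the inequality. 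On the right, the semigroup law collapses $e^{(t_n - t_{n-1})\Delta_N} e^{(t_{n-1}-a)\Delta_N}$ to $e^{(t_n - a)\Delta_N}$, and the same law applied under the integral (justified by Fubini together with the fact that $g$ is bounded measurable and the heat kernel is an $L^\infty$-bounded integral operator) collapses $e^{(t_n - t_{n-1})\Delta_N}e^{(t_{n-1}-s)\Delta_N}$ to $e^{(t_n - s)\Delta_N}$. This yields
\[
\{e^{(t_n - t_{n-1})\Delta_N} u(t_{n-1}, \cdot)\}(x) \le \{e^{(t_n-a)\Delta_N} u(a,\cdot)\}(x) + \int_a^{t_{n-1}} \{e^{(t_n-s)\Delta_N} g(s)\}(x)\, ds.
\]
Now apply the hypothesis \eqref{AA60} to the short interval $[t_{n-1}, t_n]$, which is permitted since $t_n - t_{n-1} < \epsilon$, to bound
\[
u(t_n, x) \le \{e^{(t_n - t_{n-1})\Delta_N} u(t_{n-1}, \cdot)\}(x) + \int_{t_{n-1}}^{t_n} \{e^{(t_n-s)\Delta_N} g(s)\}(x)\, ds.
\]
Substituting the previous inequality and concatenating the two integrals over $[a, t_{n-1}]$ and $[t_{n-1}, t_n]$ into a single integral over $[a, t_n]$ gives exactly the desired form of \eqref{AA60} on $[a, b]$.

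The main obstacle, such as it is, is purely bookkeeping: the inequality is only a.e.\ in $x$, so I have to be careful that the exceptional null set does not grow uncontrollably under the iterations. Since there are only finitely many steps (and only finitely many applications of Fubini to move $e^{(t_n - t_{n-1})\Delta_N}$ inside an integral in $s$), the union of the exceptional sets remains null, and the argument goes through. No convexity or boundary condition on $M$ beyond what is already used to define $e^{t\Delta_N}$ is needed here.
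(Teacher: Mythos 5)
Your argument is correct and matches the paper's proof in essence: both proceed by partitioning $[a,b]$ into subintervals of length less than $\epsilon$ and inducting, using positivity preservation of $e^{t\Delta_N}$ together with the semigroup property to collapse the composed propagators and merge the integrals. The only cosmetic difference is that the paper first invokes the short-interval hypothesis for $u(c)$ and then substitutes the inductive bound on $u(b)$, whereas you apply $e^{(t_n-t_{n-1})\Delta_N}$ to the inductive inequality first and then add the short-interval bound; the two orderings give the identical chain of estimates, and your remark about the finite union of null sets is a correct (if tacit in the paper) bookkeeping point.
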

       \begin{proof} Let  $0 < a < b < c <T$ and suppose
    $c-b < \epsilon$. For an induction proof, suppose that \eref{AA60} holds for
    this $a$ and $b$. Then
    \begin{align*}
    u(c) &\le e^{(c-b) \Delta_N} u(b) +\int_b^c e^{(c-s)\Delta_N} g(s) ds \\
    &\le e^{(c-b)\Delta_N} \Big\{e^{(b-a)\Delta_N} u(a) + \int_a^b e^{(b-s)\Delta_N}g(s) ds\Big\} + \int_b^c e^{(c-s)\Delta_N} g(s) ds \\
    &= e^{(c-a)\Delta_N}u(a)
                 +\int_a^b e^{(c-s)\Delta_N}g(s) ds +  \int_b^c e^{(c-s)\Delta_N} g(s) ds \\
    &=  e^{(c-a)\Delta_N}u(a)  + \int_a^c e^{(c-s)\Delta_N} g(s) ds.
    \end{align*}
    Therefore, given any interval $[a,b] \subset (0,T)$, one can partition it into small
    subintervals $ a =a_0 < a_1 < \cdots < a_n = b$ of length less than $\epsilon$
    and arrive at \eref{AA60} by induction.
    \end{proof}

\begin{Remark}{\rm  We have not included Marini boundary conditions in
the hypothesis of Theorem \ref{thmAA3} because the gauge invariant
regularization procedure used in the proof has not yet been proven
for Marini boundary conditions.
}
\end{Remark}


\section{Long time behavior} \label{secLTB}

It has been shown in several different contexts \cite{Ra,HT1,HT2} that
over a manifold without boundary,  a
solution to the Yang-Mills heat equation over $(0,\infty)$ converges to a limit
as  time goes to infinity through some sequence, if one counts only the gauge equivalence class at each time.
 Moreover, if one assumes a solution which is smooth for all time then the limiting
  connection is also gauge equivalent to a smooth connection \cite{Ra,HT1,HT2}, at least on an open dense set.
One can expect the same kind of behavior for a manifold with boundary.
In this section we are going to prove a  version of
such limiting behavior, but only in dimension three. It is aimed partly at showing how
 Wilson loop functions can be used to formulate such a convergence procedure
 and partly at showing how our gauge invariant regularization procedure smooths
 finite energy initial data enough to give meaning to such ``regularized Wilson loops''.

Given a connection  on a vector bundle, it is well known that the
associated  parallel transport operators along curves determine
the connection.  See e.g. \cite[Theorem 2.28]{Po}.
We are going to prove convergence of the  parallel transport operators
rather than convergence of the connection forms  themselves.
This is analogous to proving, for some sequence of unbounded self-adjoint
 operators $C_n$ on a Hilbert space, convergence of the unitary operators
 $e^{itC_n}$ instead of convergence of the $C_n$ themselves.

   Our main interest is in the regularization of rough gauge potentials, adequate for giving
   meaning to the Wilson loop function.
   We will begin with an example of a gauge potential with finite energy but
   which produces an infinite magnetic flux  through some loops.
   The Wilson loop function is meaningless for such loops.
   In the example we will  take the gauge  group to be the circle group.

 In   Section \ref{secLTB1}
we will review  how a parallel transport function on loops gives rise to a parallel
transport function on paths, with the help of homotopies. In Section \ref{secloops}
we will show  that, for a solution  to the Yang-Mills heat equation,
there is a sequence of times going to infinity for which the associated parallel
transport operators around loops converge.

\subsection{Magnetic field of a current carrying washer} \label{secwash}


\

A wire in $\R^3$ of zero thickness,  carrying current, produces a magnetic field of infinite energy.
We are going to describe a slightly smoother current distribution
which produces a magnetic field of finite energy and yet gives
  an infinite magnetic flux through certain loops. For such loops the Wilson
  loop functional is undefined.
  We will show in subsequent sections      how
   our  gauge invariant regularization procedure,  via the Yang-Mills heat
    equation, applies to  the Wilson loop function for finite energy gauge fields.

    Consider a washer  of zero thickness lying in the $x,y$ plane in $\R^3$ with center at the origin. We take
   the outer radius of the washer to be one and the inner radius to be $1/2$.
    A current circulates counterclockwise (viewed from above)
    through the washer in concentric circles centered at the origin.
    For a point $\ex$ on such  a circle, the current vector $\J(\ex)$ is tangent to the circle.
      See Figure \ref{fig:Wash7}.

\begin{figure}[htbp] 
   \centering
   \includegraphics[width=3in]{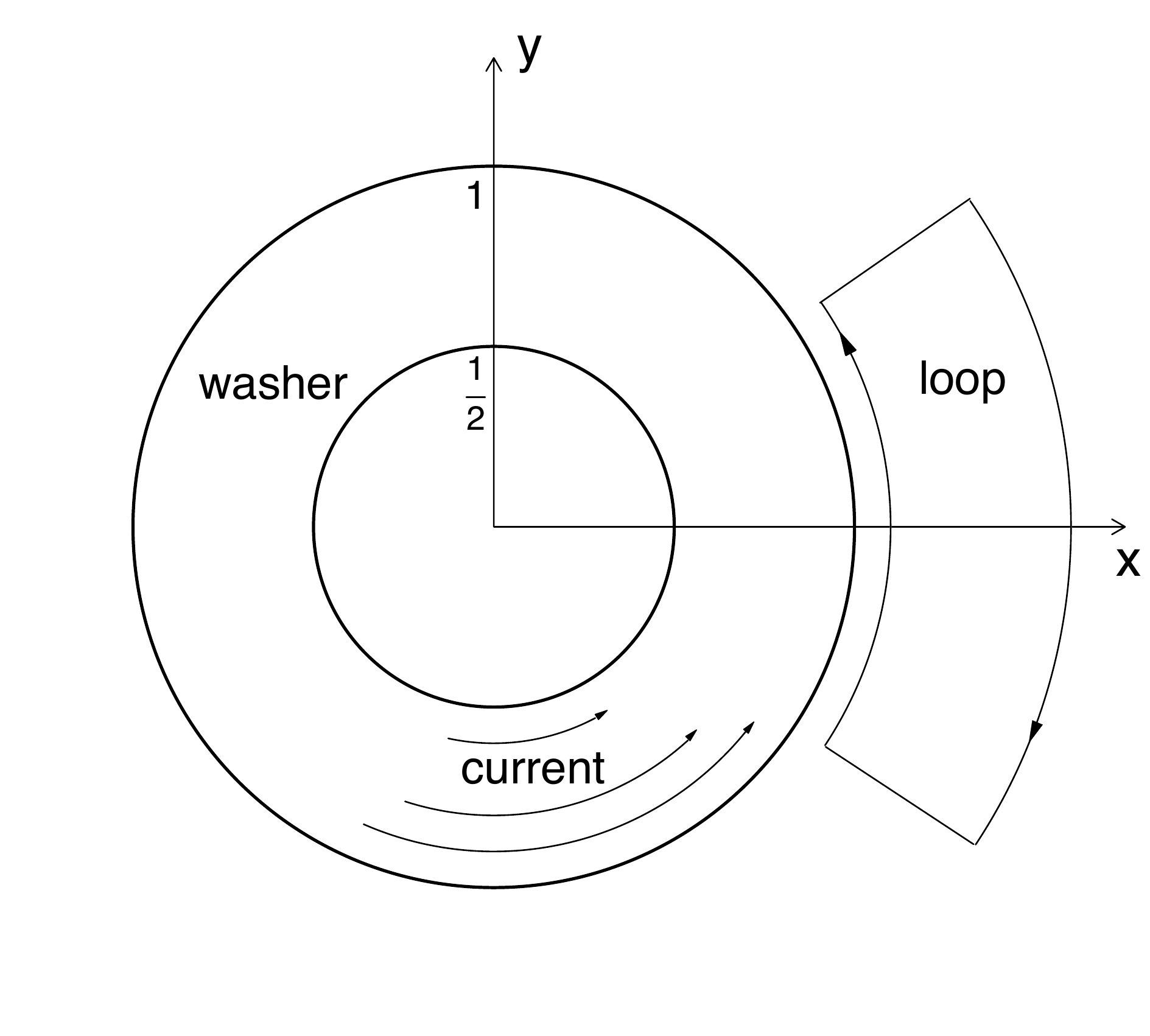}
   \caption{Current carrying washer}
   \label{fig:Wash7}
\end{figure}

    We take the current density to vary  with the distance from the origin and
    to be  heavily weighted toward the outer rim of the washer.
    We can write the planar current density explicitly     as
    \begin{align}
     \J(\ex) = \lambda(r) \Big( -\i \sin \phi + \j\cos \phi\Big)\ \ \text{when}\ \ \
      \ex = ( r \cos \phi, r \sin \phi, 0).
     \end{align}
   $\lambda(r)$ is the profile of the current strength as one moves from
    the inner rim at $r = 1/2$ to the outer rim at $r= 1$. By this we mean that $\lambda(r)dr$
    is the total current passing through a small radial interval $dr$ at distance $r$.
We will take
\begin{align}
\lambda(r)=\frac{1}{(1-r)(\log\frac{1}{1-r})^2},  \ \ \ 1/2 \le r < 1           \label{m7}
\end{align}
The intensity of current is therefore quite large near the outer rim. But the total current
circulating around the washer is $\int_{1/2}^1 \lambda(r) dr$, which is finite.
The magnetic potential produced by the current is given by $\Ae = (-\Delta)^{-1} \J$.
Thus
\begin{align}
4\pi \Ae(\ex) &= \int_{wash} \frac{\J(\ex')}{|\ex - \ex'|}d^2 \ex'       \notag \\
&=\int_{1/2}^1 dr \lambda(r) \int_{-\pi}^{\pi}
\frac{ -\i  \sin \phi + \j \cos \phi}{ |\ex - ( r \cos \phi, r \sin \phi, 0)|} d\phi              \label{m9}
\end{align}
The energy of this field is given by (See e.g. \cite[Equ. 5.153]{Jac}.)
\begin{align}
W = \int_{wash} \int_{wash} \frac{ \J(\ex) \cdot \J(\ex')}{|\ex - \ex'|} d^2\ex d^2\ex'    \label{m10}
\end{align}
where $\int_{wash}$ means the two dimensional integral over the washer. This is equivalent to the $H_1$ norm of $\Ae$ because
$\|\Ae\|_{H_1}^2  = \int_{\R^3}\sum_{j=1}^3 \p_j \Ae \cdot\p_j \Ae  d^3x
= (-\Delta \Ae, \Ae) = (\J, (-\Delta)^{-1} \J)$.

\begin{theorem}\label{washer}\

  1$)$ The gauge potential $\Ae$ has finite energy.

2$)$  There are piecewise smooth curves of finite length  in the $x,y$ plane
through which the  magnetic flux is infinite. In particular
the holonomy  (Wilson loop) $ W_C({\Ae}) = e^{i\infty}$ is undefined for such a curve $C$.
\end{theorem}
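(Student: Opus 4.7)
For Part 1, I would start from the double integral representation \eref{m10} for the energy $W$ and exploit axial symmetry. Writing $\J(\ex)\cdot\J(\ex') = \lambda(r)\lambda(r')\cos(\phi-\phi')$ and $|\ex-\ex'|^2 = (r-r')^2 + 4rr'\sin^2((\phi-\phi')/2)$, integration over the angular variables collapses the expression to
\[
W = 2\pi \int_{1/2}^1 \int_{1/2}^1 r r' \lambda(r)\lambda(r') K(r, r') \, dr\, dr',
\]
where $K(r,r') = \int_{-\pi}^\pi \cos\psi / \sqrt{r^2 + r'^2 - 2rr'\cos\psi} \, d\psi$ is a standard complete elliptic integral. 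The key feature is that on the compact square $[1/2,1]^2$ the kernel has only a logarithmic singularity on the diagonal, $K(r,r') \le C(1 + \log(1/|r-r'|))$. The substitution $u = \log(1/(1-r))$, $v = \log(1/(1-r'))$ converts $\lambda(r)\, dr$ into the clean form $du/u^2$, maps the singular region $r\to 1$ to $u\to\infty$, and reduces the finiteness of $W$ to that of $\iint \log(1/|e^{-u} - e^{-v}|)/(u^2v^2)\, du\, dv$ over $[\log 2, \infty)^2$. An elementary estimate $\log(1/|e^{-u}-e^{-v}|) \le \min(u,v) + \log_+(1/|u-v|) + C$, valid for $u, v \ge \log 2$, then breaks this into three pieces, each easily integrable against $du\, dv/(u^2 v^2)$.

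For Part 2, I would take $C$ to be the outer rim $\{(x,y,0): x^2+y^2=1\}$, a smooth circle of length $2\pi$. By Stokes' theorem and axial symmetry the flux through a concentric circle $C_\rho$ in the $xy$-plane equals $2\pi\rho\, A_\phi(\rho)$, where \eref{m9} gives
\[
A_\phi(\rho) = \frac{1}{4\pi}\int_{1/2}^1 \lambda(r') K(\rho, r')\, dr'.
\]
At $\rho=1$ the kernel $K(1, r')$ exhibits a logarithmic divergence as $r' \to 1$: the contribution from $\psi$ near $0$ is $\int d\psi /\sqrt{(1-r')^2 + r'\psi^2} \ge c\log(1/(1-r'))$. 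Plugging into the formula for $\lambda$ and substituting $u = \log(1/(1-r'))$ yields
\[
A_\phi(1) \ge c' \int_{1/2}^1 \frac{dr'}{(1-r')\log(1/(1-r'))} = c'\int_{\log 2}^\infty \frac{du}{u} = \infty.
\]
Monotone convergence shows $A_\phi(\rho) \to +\infty$ as $\rho \uparrow 1$, so the fluxes $2\pi\rho A_\phi(\rho)$ diverge and the Abelian Wilson loops $W_{C_\rho}(\Ae) = e^{2\pi i \rho A_\phi(\rho)}$ fail to converge; writing $W_C(\Ae) = e^{i\infty}$ for the limiting rim records precisely this failure of the holonomy to be defined. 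Any piecewise smooth planar curve that meets the rim along a non-negligible arc will carry infinite flux by the same mechanism.

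The main technical obstacle is the asymptotic analysis of the elliptic-type kernel $K(r, r')$ both near its diagonal singularity and at the endpoint $r=1$. Both behaviors are governed by the standard $\log(1/\sqrt{1-k^2})$ singularity of the complete elliptic integral of the first kind as its modulus $k\to 1$; the needed upper bound and lower bound are routine but must be made explicit. The profile $\lambda(r) = 1/((1-r)(\log(1/(1-r)))^2)$ in \eref{m7} is calibrated so that the substitution $u = \log(1/(1-r))$ produces the borderline density $du/u^2$, which is integrable enough to make the total current and the self-energy double integral finite, yet just barely fails to be integrable against one extra logarithmic factor, which is exactly the factor produced by $K(1,r')$ near the rim and which delivers the infinite flux.
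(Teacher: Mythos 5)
Your proposal is correct and follows the same essential strategy as the paper's proof, but the bookkeeping differs in two places that are worth noting. In Part~1, the paper substitutes $s=1-r$, $s'=1-r'$ and controls $\int_0^{s'}\mu(s)\log\frac{1}{s'-s}\,ds$ by splitting the range at $s'/2$ and using monotonicity of the two factors; you instead push one step further to $u=\log\frac{1}{1-r}$ so that $\lambda(r)\,dr=du/u^2$, and then exploit the clean inequality $\log\frac{1}{|e^{-u}-e^{-v}|}\le \min(u,v)+\log_+\frac{1}{|u-v|}+C$. Both routes are valid and of comparable length; yours makes the ``borderline $du/u^2$'' calibration of $\lambda$ visually transparent, while the paper's remains in the original variable and keeps the elementary estimate local. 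In Part~2, the paper takes $C$ to be the boundary of an annular sector with inner arc sitting on the outer rim (so the radial pieces contribute zero and the inner arc contributes $+\infty$), while you take $C$ to be the full rim circle and pass to it through concentric circles $C_\rho$ of radius $\rho\uparrow 1$; both implement the same divergence and both implicitly rely on the approximating contours $C_\epsilon$ (respectively $C_\rho$) to phrase the ``infinite flux'' claim, since $\Ae$ is singular on the rim itself. Finally, you defer the log-asymptotics of the angular kernel to ``standard elliptic integral estimates,'' whereas the paper proves the two-sided bound \eref{m51} directly in a short elementary lemma in Section~\ref{seculbds}; in a complete write-up you would still need to supply that lemma or a citation, as it is the only analytic input the whole proof rests on.
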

           \begin{proof}  To bound the energy \eref{m10}  take $ \ex = (r \cos \phi, r \sin \phi, 0)$ and
           $\ex ' = (r' \cos \phi', r' \sin \phi', 0)$  in \eref{m10}. Then
      $|\ex - \ex'|^2 = (r\cos \phi - r'\cos \phi')^2 + (r\sin \phi - r'\sin \phi')^2 = r^2 + (r')^2 - 2rr' \cos(\phi -\phi') =(r-r')^2 +2rr'(1-\cos(\phi -\phi'))$ and
 $\J(\ex)\cdot \J(\ex') =  \lambda(r) \lambda(r') (-\i \sin\phi + \j \cos\phi)\cdot(-\i \sin \phi' + \j \cos \phi') =  \lambda(r) \lambda(r')  \cos (\phi - \phi')$. Hence
 \begin{align}
W &= \int_{1/2}^1 \int_{1/2}^1 drdr' \lambda(r) \lambda(r')
  \int_{-\pi}^\pi \int_{-\pi}^\pi
            \frac{ \cos(\phi - \phi')\ d\phi d\phi'}{ \Big( (r-r')^2 +2rr'(1-\cos(\phi -\phi'))\Big)^{1/2}} \notag\\
  &=          2\pi  \int_{1/2}^1 \int_{1/2}^1drdr'  \lambda(r) \lambda(r')\int_{-\pi}^\pi
\frac{\cos \theta\ d\theta }{\Big((r-r')^2 + 2rr'(1- \cos\theta)\Big)^{1/2} }.  \label{m12}
 \end{align}

Since $\lambda(r)$ has singular behavior near $ r =1$ we will need to bound
the $\theta$ integral above  to prove finite energy. We will also need a lower bound
later to prove that $\Ae(\ex)$ is unbounded.
For these purposes we will show in Section \ref{seculbds} that
  there are strictly positive constants   $c_1, c_2, C_1, C_2$ such that
\begin{align}
c_1 + c_2 \log\frac{1}{u} &\le \int_{-\pi/4}^{\pi/4}
 \frac{\cos\theta}{\Big( u^2 + 2v^2 (1-\cos \theta)\Big)^{1/2}} d \theta
              \le C_1 + C_2  \log\frac{1}{u}                                   \label{m51}\\
&\text{for}\ \  0 < u <1\ \ \text{and}\ \  1/2 \le v \le 2        \label{m52}
\end{align}

 Now let $u = |r - r'|$ and $v = (rr')^{1/2}$. Then   \eref{m52} is satisfied for all $r$ and $r'$
 entering the integrals in \eref{m12}. The contribution
 to the $\theta$ integral in \eref{m12}  from $|\theta| \ge \pi/4$ is a bounded function of $ r$
 and $r'$ and since $\lambda(r)$ is integrable  the contribution to   \eref{m12} from
 $|\theta| \ge \pi/4$ is finite. In view of the second inequality in \eref{m51} it suffices therefore
  to show that
  \beq
  \int_{1/2}^1dr \int_{1/2}^1dr'  \lambda(r) \lambda(r') \log \frac{1}{|r-r'|}  < \infty.
  \eeq
  Since the possibly non-integrable singularity is near $r= r' =1$ it will be
   more perspicuous to change variables
  to $s = 1-r$ and $s' = 1 - r'$.   Thus we need to show that
  \begin{align}
  \int_0^{1/2} \int_0^{1/2} \mu(s) \mu(s') \log\frac{1}{|s - s'|} ds ds' < \infty      \label{m32}
  \end{align}
  when
  $\mu(s) = (s (\log s)^2)^{-1}$. The value of  this double integral over the two triangles
  $s \le s'$ and $ s' \le s$ is the same. So it suffices to show that one of them is finite.
  In fact we will show that
  \beq
  \int_0^{s'} \mu(s)\log\frac{1}{|s - s'|} ds
  \eeq
  is bounded for $ 0 \le s' \le 1/2$, which will prove \eref{m32} because $\mu(s')$ is integrable.

  Let $c = s'/2$. Now $\log\frac{1}{s' -s}$ is an increasing function of $s$ on $(0,c)$ while
  $\mu(s)$ is a decreasing function of $s$ on $(c, s')$. Hence
  \begin{align}
  \int_0^{s'} \mu(s) \log\frac{1}{s' -s} ds = \int_0^c \mu(s) \log\frac{1}{s' -s} ds + \int_c^{s'} \mu(s) \log\frac{1}{s' -s} ds  \notag\\
  \le \log\frac{1}{s' -c} \int_0^c \mu(s) ds + \mu(c) \int_c^{s'} \log\frac{1}{s' -s} ds    \label{m34}
  \end{align}
  Both integrals can be done explicitly. One finds
  $\int_0^c \mu(s) ds = (\log (c^{-1}))^{-1} =(\log (2/s'))^{-1}$
  and $\int_c^{s'} \log\frac{1}{s' -s} ds =(s'/2) ( 1 +\log (2/s'))$.
  Hence the right side of \eref{m34} equals
  \begin{align*}
  \Big(\log\frac{2}{s'}\Big) \frac{1}{\log\frac{2}{s'}}
  + \frac{1}{(s'/2)(\log\frac{2}{s'})^2} \cdot (s'/2) \Big(1+ \log\frac{2}{s'}\Big)
                   = 1 +\frac{1+\log\frac{2}{s'}}{(\log\frac{2}{s'})^2},
  \end{align*}
  which is bounded on $ 0 < s' \le 1/2$.  This proves Part 1) of Theorem \ref{washer}.

      For Part 2 we need to understand the behavior of the magnetic potential
  $\Ae(\ex)$ as $\ex$ approaches the outer rim of the washer.
  Because of the cylindrical symmetry it will suffice to do this when $\ex$ lies in the
 $x,z$ plane. In fact it suffices to consider just  $ \ex = (x_1,0, x_3)$ with $x_1\ge 0$.
  The distance from $\ex$ to   a current element is
   $ | \ex - (r \cos \phi, r\sin \phi, 0)|^2 = (x_1 - r\cos \phi )^2 + r^2 \sin^2 \phi + x_3^2
   = x_1^2 + x_3^2 +r^2 - 2 x_1 r \cos \phi = (x_1 - r)^2 + x_3^2 + 2x_1 r (1 -\cos\phi)$.
   Inserting this into \eref{m9} we see that the denominator is an even function of $\phi$.
   The contribution of $\i \sin \phi$ in the integral is therefore zero. Hence
   \beq
  4\pi {\Ae}(\ex) = \j \int_{1/2}^1 dr \lambda(r) \int_{-\pi}^\pi
           \frac{\cos \phi}{\Big( (x_1 -r)^2 + x_3^2 + 2x_1 r (1 - \cos \phi)\Big)^{1/2}} d\phi  \label{m43}
   \eeq
   for $\ex$ in the $x,z$ plane. From the cylindrical symmetry we see that $\Ae(\ex)$ is horizontal for all $\ex \in \R^3$ and in fact is tangent to the horizontal circle which is centered
    on the $z$ axis    and passes through $\ex$. (On the $z$ axis $\Ae(\ex)$ is zero, as
    one sees by putting $x_1=0$ in \eref{m43}.)

    Of course $\Ae$ is a smooth function on the complement of the closed washer
    because the denominator in \eref{m9}  is locally bounded away from zero there.
    We need only focus attention on the behavior of $\Ae(\ex)$ for $\ex$ in a
     small neighborhood of $(1,0,0)$ in the $x,z$ plane. For such $\ex$ the contribution
     to the integral  from points in the washer where $|\phi| \ge \pi/4$
     produces a smooth function of $x_1, x_3$ for $x_1 >0$. We therefore need only
      to analyze  the behavior of the function $f$ defined by
     \beq
     f(x_1, x_3) = \int_{1/2}^1 dr \lambda(r) \int_{|\phi| \le \pi/4}
     \frac{\cos \phi\  d\phi}{\Big( (x_1 -r)^2 + x_3^2 + 2 x_1 r(1- \cos \phi) \Big)^{1/2}} .
     \eeq
     The first inequality in \eref{m51} will give
     a lower bound on this integral just outside the outer rim of the washer as follows.
  Suppose that
 $1 \le x_1 \le  5/4$ and $|x_3| \le 1/2$. Let $u^2 = (x_1 - r)^2 + x_3^2$ and $ v^2 = x_1 r$.
 The reader can verify that \eref{m52} is satisfied.
  Hence
     \begin{align}
     f(x_1, x_3) \ge  \int_{1/2}^1 \lambda(r)(c_1 + c_2 \log \frac{1}{u}) dr .
          \end{align}
    If $x_1 \downarrow 1$
     and $|x_3| \downarrow 0$ then $u\downarrow 1-r$ and   the monotone convergence theorem shows that,
     for some finite constant $C_6$, one has
     \begin{align}
   \liminf_{x_1\downarrow 0\  |x_3|\downarrow 0}  f(x_1,x_3) & \ge c_2 \int_{1/2}^1 dr \lambda(r) \log\frac{1}{1-r}  + C_6  \notag\\
     &=c_2\int_{1/2}^1 dr \frac{1}{(1-r)(\log\frac{1}{1-r})^2} \log\frac{1}{1-r} + C_6   \notag\\
     &= \infty.                                              \label{m48}
     \end{align}
     Therefore $\Ae(\ex)\cdot \j$ is infinite at the point $(1,0,0)$ and also goes
      to $\infty$ as $\ex \rightarrow (1,0,0)$ in the $x,z$ plane.

Consider now the loop shown in Figure \ref{fig:Wash7}. It is given by a closed curve $C$
lying in the $x,y$ plane and forming the boundary of an annular sector centered at the origin and whose inner radius is one. In Figure \ref{fig:Wash7} the curve is shown separated from
the rim of the washer for clarity. But we are interested in the circumstance in which the inner circle
of the annular sector coincides with a portion of the outer rim of the current carrying washer.
The outer circle segment of $C$
 is concentric with the inner one and is joined to it by radial lines.
   Since $\Ae$ is tangential to the outer rim
 of the washer it is also tangential to the inner circle of $C$.
 However this tangential component of $\Ae$ is infinite, as we have seen.
 Thus the integral of $\Ae$ along the inner circle of $C$
 is infinite.  The integral of
 $\Ae$ along the two radial lines
 is zero because $\Ae$ is perpendicular to these radial lines.
  The integral of $\Ae$ along the outer circle is finite because $\Ae$ is smooth
 in the vicinity of the outer circle. Thus $\int_C \Ae(\ex) \cdot d\ex = \infty$.
 This proves Part 2) of Theorem \ref{washer}.

        There is another sense in which this loop integral is infinite:  keep the outer circle
        of the curve $C$  fixed
        and shift the inner circle
        away from the outer rim of the washer by a small amount,
        say $\epsilon >0$, as is shown in Figure \ref{fig:Wash7}.
        For this curve $C_\epsilon$  the contour integral $ \int_{C_\epsilon} \Ae(\ex) \cdot dx$
        is finite, but  increases to infinity as $\epsilon \downarrow 0$ because, as   \eref{m48}
        shows, for $x_3=0$ the tangential component of $\Ae(\ex)$ increases to $\infty$
        as  $x_1 \downarrow 1$. In particular, by Stokes' theorem, the magnetic flux through
        the planar surface  bounded by $C_\epsilon$ increases
            to $\infty$ as $\epsilon \downarrow 0$. (The right hand rule
  also shows that the   magnetic field $B$ points downward everywhere on the ring.)
 \end{proof}

 \begin{remark}\label{remrestrict}{\rm  Trace
 theorems assert that   a function lying in a Sobolev space
 $H_s$ over a manifold  will, upon restriction to a co-dimension one submanifold $N$,
 lie in $H_{s -(1/2)}(N)$. This theorem notoriously breaks down if $s = 1/2$.
 In our case
 the magnetic field $\Ae$ lies in $H_1(\R^3)$ and therefore restricts to a function
  in $H_{1/2}(N)$ for any reasonable surface $N$. One can try to restrict it once more to
  a curve $C$ contained in $N$ and ask what properties the restriction to $C$ has.
  Since $s$ is now equal to $1/2$ the trace  theorem breaks down. One can not infer from it
  that the   restriction  to the curve $C$ is  an almost everywhere finite
  function. Our example shows that the very
   worst can happen: the restriction of the magnetic field to an arc of the outer rim of the washer is
    identically infinite. For further discussion of trace theorems see \cite[Theorem 9.5]{LM}
    and \cite{EL}.
 }
 \end{remark}

\subsubsection{Upper and lower bounds for an integral} \label{seculbds}

\begin{lemma}
 Let $0 < \theta_0 < \pi/2$ and let
$a^2 = (\sin\theta_0)/\theta_0$ with $a >0$. Then, for $u>0$ and $v >0$, there holds
\begin{align}
\frac{1}{v} \log (1 + \frac{v\theta_0}{u})
   \le \int_0^{\theta_0} \frac{1}{\Big( u^2 + 2 v^2 (1 - \cos\theta)\Big)^{1/2}} d \theta
   \le \frac{\sqrt{2}}{va}  \log (1 + \frac{va\theta_0}{u}) .        \label{m25}
   \end{align}
   In particular \eref{m51} holds.
\end{lemma}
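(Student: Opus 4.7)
\smallskip
\noindent
\textbf{Proof plan.}
The strategy is to bracket $1-\cos\theta$ between two multiples of $\theta^2$ on $[0,\theta_0]$, replace the given integral by two explicitly computable integrals of the form $\int_0^{\theta_0}(u^2+c^2\theta^2)^{-1/2}d\theta$, and then compare the resulting $\sinh^{-1}$ expressions with $\log(1+\,\cdot\,)$.

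First I would establish the two-sided bound
\beq
\tfrac{1}{2}a^2\theta^2\;\le\; 1-\cos\theta\;\le\;\tfrac12\theta^2,\qquad 0\le\theta\le\theta_0. \notag
\eeq
The upper bound is $1-\cos\theta=\int_0^\theta\sin s\,ds\le\int_0^\theta s\,ds$. For the lower bound, since $(\sin s)/s$ is decreasing on $(0,\pi/2)$ and $\theta_0<\pi/2$, we have $\sin s\ge (\sin\theta_0/\theta_0)s=a^2 s$ for $s\in[0,\theta_0]$; integrating from $0$ to $\theta$ yields $1-\cos\theta\ge a^2\theta^2/2$. Plugging these into the integrand,
\beq
\int_0^{\theta_0}\!\!\frac{d\theta}{\sqrt{u^2+v^2\theta^2}}
\;\le\;\int_0^{\theta_0}\!\!\frac{d\theta}{\sqrt{u^2+2v^2(1-\cos\theta)}}
\;\le\;\int_0^{\theta_0}\!\!\frac{d\theta}{\sqrt{u^2+a^2v^2\theta^2}}. \notag
\eeq
The scaling $\theta\mapsto c\theta/u$ evaluates the outer integrals exactly as $v^{-1}\sinh^{-1}(v\theta_0/u)$ and $(va)^{-1}\sinh^{-1}(va\theta_0/u)$, respectively.

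Next I would show
\beq
\log(1+x)\;\le\;\sinh^{-1}x\;\le\;\sqrt{2}\,\log(1+x),\qquad x\ge 0. \notag
\eeq
Both inequalities are equalities at $x=0$, so they reduce to a comparison of derivatives. The left one is $(1+x)^{-1}\le(1+x^2)^{-1/2}$, equivalent to $(1+x)^2\ge 1+x^2$, i.e.\ $2x\ge 0$. The right one is $\sqrt{2}(1+x)^{-1}\ge(1+x^2)^{-1/2}$, equivalent to $2(1+x^2)\ge(1+x)^2$, i.e.\ $(x-1)^2\ge 0$. Combining these with the previous display gives exactly \eqref{m25}.

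Finally, to deduce \eqref{m51}, I would apply \eqref{m25} with $\theta_0=\pi/4$, so $a^2=\sin(\pi/4)/(\pi/4)=2\sqrt2/\pi$. The integrand in \eqref{m51} is even, so the symmetric integral equals $2\int_0^{\pi/4}\cos\theta(\cdots)^{-1/2}d\theta$, and on $[0,\pi/4]$ one has $\sqrt{2}/2\le\cos\theta\le 1$, which allows the $\cos\theta$ factor to be absorbed into the constants. For $v\in[1/2,2]$ and $u\in(0,1)$, both $\log(1+v\pi/(4u))$ and $\log(1+va\pi/(4u))$ are comparable to $\log(1/u)$ plus a bounded term, yielding bounds of the announced form $c_1+c_2\log(1/u)$ and $C_1+C_2\log(1/u)$ with strictly positive constants.

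The only real obstacle is arranging the lower comparison so that the additive constant $c_1$ in \eqref{m51} comes out \emph{strictly positive}; this requires keeping careful track of the bounded leftover $\log((8u+\pi)/8)$ (and exploiting the lower bound $\log(1/u)\ge 0$ on $(0,1]$) to absorb any would-be negative constant. All other steps are routine calculus.
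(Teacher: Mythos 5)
Your proof is correct and takes essentially the same route as the paper: the $a^2\theta^2 \le 2(1-\cos\theta) \le \theta^2$ sandwich, the rescaling to $\int_0^b(1+s^2)^{-1/2}\,ds$, and your derivative comparison of $\log(1+x)$ with $\sinh^{-1}x$ are the same steps as the paper's pointwise inequality $1+s^2 \le (1+s)^2 \le 2(1+s^2)$, and the handling of the $\cos\theta$ factor and the reduction to $\log(1/u)$ is also the same. The one ``obstacle'' you flag, arranging $c_1>0$, can be dispatched without any absorption argument: the map $v \mapsto v^{-1}\log(1+v\theta_0/u)$ is nonincreasing in $v$ (because $x/(1+x) \le \log(1+x)$), so its minimum over $v \in [1/2,2]$ is attained at $v=2$, and there $1 + \pi/(2u) > \pi/(2u)$ with $\pi/2>1$ already yields a strictly positive additive constant.
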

\begin{proof} For $s \ge 0$ the inequalities $ 1+s^2 \le (1+s)^2 \le 2(1+s^2)$ imply that
\begin{align}
\int_0^b (1+s^2)^{-1/2} ds \ge \log(1+b)
\ge 2^{-1/2} \int_0^b (1+s^2)^{-1/2} ds\ \text{for}\  b >0.      \label{m26}
\end{align}
Since $a^2$ is the slope of a line segment lying below $\sin(\cdot)$, integration gives
$ a^2 \theta^2 \le 2(1-\cos\theta) \le \theta^2$ for $0 \le \theta \le \theta_0$. Hence
\begin{align*}
\int_0^{\theta_0}\frac{d\theta}{\Big( u^2 + v^2 a^2 \theta^2\Big)^{1/2} }
\ge \int_0^{\theta_0}\frac{d\theta}{\Big(u^2 + 2v^2 ( 1- \cos\theta)\Big)^{1/2}}
\ge \int_0^{\theta_0}\frac{d\theta}{\Big( u^2 + v^2  \theta^2\Big)^{1/2} }
\end{align*}
Change variables in the left-most integral to $s = (va/u)\theta$ to find
$(va)^{-1}\int_0^{va\theta_0/u} (1+ s^2)^{-1/2} ds$, which, by \eref{m26}, is at most
 $(va)^{-1}\sqrt{2} \log(1 + (va\theta_0/u))$. The other half of \eref{m25} follows similarly.

          For the proof of \eref{m51} we can ignore the factor $\cos\theta$ in the numerator
           of \eref{m51}  because it is bounded and  bounded away from zero  on $[-\pi/4, \pi/4]$.
           We are assuming now that $1/2\le v \le 2$, from which
   it follows that the left side of \eref{m25} dominates the left side of \eref{m51} for some constants
   $c_1, c_2$ and for all $u \in (0,1)$. Moreover, since  $va\pi/4 \le2$, the right side of
   \eref{m25} is at most $(2\sqrt{2}/a)\log(1 +(2/u)) \le (4\sqrt{2}/a)\log(2/u)$  because
   $\log (1+x) \le 2 \log x$ when $x \ge 2$.
\end{proof}

\subsection{From loops to paths} \label{secLTB1}

\begin{notation}\label{notpaths}
{\rm  Let $M$ be the closure of a bounded open set in $\R^3$
with smooth convex boundary.  Denote by $\Gamma$ the set of piecewise
$C^1$ functions from $[0,1]$ into $M^{int}$.
If $\gamma$ and $\mu$ are two elements of $\Gamma$
such that $\gamma(1) = \mu(0)$ then their concatenation $\gamma\mu$ is defined by
 \beq
 (\gamma \mu)(s) =
 \begin{cases} \gamma(2s), &0\le s \le 1/2 \\
                        \mu(2s -1), &1/2 \le s \le 1.
   \end{cases}             \label{ltp3}
   \eeq
 The curve $\gamma\mu$ is clearly again in $\Gamma$.
 The inverse path is defined as usual
 by $\gamma^{-1} (s) = \gamma(1-s), 0 \le s \le 1$. The path $\gamma \gamma^{-1}$
  retraces itself.

      By a {\it parallel transport system} in the bundle $\V\times M \rightarrow M$
      we mean a map
      $\Gamma \ni \gamma \mapsto //_\gamma \in  End\ \V$ such that

 i) $//_{\gamma\circ \phi} = //_\gamma$    for any homeomorphism $\phi:[0,1]\rightarrow [0,1]$, which, together with its inverse, is piecewise $C^1$,

 ii) $//_{\gamma\mu} = //_\gamma\ \  //_\mu$,

 iii) $//_{\gamma \gamma^{-1}} = I_\V$.

 Taking $\epsilon_0$ to be the trivial curve, $\epsilon_0(s) \equiv x_1$, it  follows from
 ii) and iii) that  $//_{\epsilon_0} = I$ and $(//_\gamma)^{-1} = //_{\gamma^{-1}}$.

Under further technical assumptions such a parallel transport system always
comes from a connection on the bundle $\V\times M \rightarrow M$. This has been discussed for example in \cite[Theorem 2.28]{Po}.

We are going to show that, for a solution $A(\cdot)$ to the Yang-Mills heat equation,
and for any sequence $t_k$ going to infinity,
the parallel transport operators $//_\gamma^{A(t_k)}$ converge to such a parallel transport
system   after suitable gauge transformations.

Choose  a point $x_0 \in M^{int}$ and denote the set of loops at $x_0$ by
      \beq
\Gamma_0 = \{ \gamma \in \Gamma : \gamma(0) = \gamma(1) = x_0 \}.   \label{ltp5}
      \eeq
A parallel transport system can be recovered, up to gauge transformation,
from its restriction to $\Gamma_0$
 by choosing a homotopy of $M$ with $x_0$.
 The well known procedure for doing this will be described
 in the following algebraic lemma.

Let $X$ be a manifold and let $x_0$ be a point in $X$. By a piecewise $C^1$ homotopy of
$X$ with $\{x_0\}$ we mean a continuous map
 $h: [0,1]\times
 X \rightarrow X$ with  $h(0,x) = x_0$, $ h(1,x) = x$,
 and, for all $x \in X$, the curve $s \mapsto h_x(s) := h(s,x)$ is
  piecewise $C^1$. We will assume also that $h(s, x_0)= x_0$ for all $s \in [0,1]$.
  Our limit results can easily be extended to non-contractible manifolds,
  but the analytic idea is already well illustrated in the contractible case,
  to which we will restrict our attention.

 \begin{lemma}  \label{loops-paths}
      Let  $X$ be a finite dimensional pathwise connected manifold. Let
      $x_0 \in X$. Denote by $\Gamma$ the set of piecewise $C^1$ functions
      from $[0,1]$ into $X$ and define
      $\Gamma_0 = \{ \gamma \in \Gamma : \gamma(0) = \gamma(1) = x_0 \}$.
      Suppose that $P: \Gamma_0 \rightarrow End\ \V$ is a map
      with the following properties

      $1)\ ($parametrization invariance$)$  $P(\gamma) = P(\gamma\circ \phi)$ for any
      piecewise $C^1$ homeomorphism $\phi: [0,1]\rightarrow [0,1]$ with
      piecewise $C^1$ inverse.

      $2)$  $P(\gamma \mu) = P(\gamma) P(\mu)$ for all $\gamma$ and
      $\mu$ in $\Gamma_0$.

      $3)$  $P(\gamma \gamma^{-1}) = I_\V$ for any path $\gamma \in \Gamma$ with
      $\gamma(0) = x_0$.

   \noindent
      Let $h:[0,1]\times X \rightarrow X$ be a piecewise $C^1$ homotopy of $X$ to $x_0$.

      Then there is a  unique parallel transport system
      $//_\gamma,\ \gamma \in \Gamma$, which is the identity along all homotopy
      paths $h_x(\cdot)$ and agrees with $P$ on $\Gamma_0$.
      \end{lemma}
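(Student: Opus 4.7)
My plan is to construct the extension explicitly by setting $//_\gamma := P(h_{\gamma(0)}\,\gamma\,h_{\gamma(1)}^{-1})$ for every $\gamma\in\Gamma$, then verify that this formula is forced by the hypotheses (uniqueness) and that it defines a parallel transport system (existence).

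\emph{Uniqueness.} Suppose $//$ satisfies the conclusion. Property iii applied to $h_x$ together with property ii gives $//_{h_x}\cdot //_{h_x^{-1}} = //_{h_x h_x^{-1}} = I_\V$, so $//_{h_x^{-1}} = I_\V$ as well. Writing $y=\gamma(0)$, $z=\gamma(1)$, two applications of property ii then yield
\[
//_\gamma = //_{h_y}\, //_\gamma\, //_{h_z^{-1}} = //_{h_y\gamma h_z^{-1}} = P(h_y\gamma h_z^{-1}),
\]
using in the last step that $h_y\gamma h_z^{-1}\in\Gamma_0$ and $//$ agrees with $P$ there.

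\emph{Easy axioms of existence.} With $//_\gamma$ defined by the displayed formula, property i follows from property 1 of $P$ applied to the obvious reparametrization of $h_{\gamma(0)}\gamma h_{\gamma(1)}^{-1}$ induced by a reparametrization of $\gamma$. For property iii, note that $h_{\gamma(0)}\gamma\gamma^{-1}h_{\gamma(0)}^{-1}$ reparametrizes to $(h_{\gamma(0)}\gamma)(h_{\gamma(0)}\gamma)^{-1}$, and since $h_{\gamma(0)}\gamma$ starts at $x_0$, property 3 of $P$ gives $I_\V$. The normalization $//_{h_x}=I_\V$ and the agreement $//|_{\Gamma_0}=P$ both reduce, via the fact that $h_{x_0}$ is the constant path at $x_0$, to property 3 applied to the path $h_x$.

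\emph{Property ii.} This is the main obstacle. Property 2 of $P$ gives, for $\gamma:y\to z$ and $\mu:z\to w$,
\[
//_\gamma\, //_\mu = P\bigl((h_y\gamma h_z^{-1})(h_z\mu h_w^{-1})\bigr),
\]
which up to reparametrization is $P(h_y\gamma h_z^{-1}h_z\mu h_w^{-1})$. This must be compared with $//_{\gamma\mu} = P(h_y\gamma\mu h_w^{-1})$; the two $x_0$-loops differ precisely by an inserted round-trip $h_z^{-1}h_z$ through $x_0$ at the intermediate point $z$. I would cancel this round-trip by showing that the discrepancy loop
\[
D := (h_y\gamma h_z^{-1}h_z\mu h_w^{-1})\cdot (h_y\gamma\mu h_w^{-1})^{-1}
\]
satisfies $P(D)=I_\V$. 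The strategy is to conjugate the retracing $h_z^{-1}h_z$, which sits at the non-basepoint $z$, by the path $h_y\gamma$ (which begins at $x_0$) so that property 3 becomes applicable — recognizing, for instance, $(h_y\gamma h_z^{-1})(h_y\gamma h_z^{-1})^{-1}$ as a retracing loop at $x_0$ with trivial $P$-value — and then to reassemble the factors into $D$ by repeated use of property 2. Carrying out this combinatorial cancellation, i.e.\ transporting a retracing away from a non-basepoint to the basepoint where property 3 applies, is the delicate technical step and is precisely the place where properties 2 and 3 must be combined.
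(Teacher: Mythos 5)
Your construction of $//_\gamma$, your uniqueness argument, and your verification of axioms i) and iii) all match the paper's proof and are correct. Your instinct about property ii) is also right: after applying property 2) one has $//_\gamma\,//_\mu = P\bigl((h_y\gamma h_z^{-1})(h_z\mu h_w^{-1})\bigr)$, and this loop differs from $h_y\gamma\mu h_w^{-1}$ by an inserted retracing $h_z^{-1}h_z$ at the interior point $z$; one must show $P$ does not see this insertion. The paper, incidentally, passes over this step silently, asserting $P(h_x\gamma\mu h_z^{-1})=P\bigl((h_x\gamma h_y^{-1})(h_y\mu h_z^{-1})\bigr)$ with no justification, so you are being more careful than the source.

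But the strategy you outline does not close the gap, and I do not believe it can be closed from axioms 1)--3) alone. Property 3) only kills retracings $\gamma\gamma^{-1}$ that are \emph{based at} $x_0$, and property 2) only lets you split a loop at the points where it \emph{visits} $x_0$. Every attempt to move the inserted retracing $h_z^{-1}h_z$ to the basepoint --- by conjugating, by multiplying in a retracing $(\kappa h_z^{-1})(\kappa h_z^{-1})^{-1}$, etc.\ --- introduces a new retracing $\kappa^{-1}\kappa$ sitting at $z$, so you reduce the problem to an identical instance of itself. Concretely, $\kappa_1\kappa_2$ never appears as a contiguous factor of $\kappa_1 h_z^{-1}h_z\kappa_2$ in the monoid $\Gamma_0/\text{reparam}$, so multiplicativity on $\Gamma_0$ cannot relate the two $P$-values. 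What is really needed is the stronger hypothesis that $P$ is unchanged by inserting a retracing $\rho^{-1}\rho$ anywhere along a loop (equivalently, that $P$ descends to thin-homotopy classes). That hypothesis holds automatically in the paper's application, because in Theorem \ref{LTB} the map $P$ is a limit of honest parallel transport operators $//_\gamma^{\alpha_j}$, each of which is thin-homotopy invariant; with it in hand, $P\bigl(h_y\gamma h_z^{-1}h_z\mu h_w^{-1}\bigr)=P\bigl(h_y\gamma\mu h_w^{-1}\bigr)$ is immediate and your proof (and the paper's) goes through. You should state this additional invariance explicitly rather than leaving ii) as a sketch.
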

                    \begin{proof}  Suppose that $\gamma \in \Gamma$ with
   $\gamma(0) = x$ and  $\gamma(1) = y$.  Then the  path
      $h_x \gamma h_y^{-1}$  lies in $\Gamma_0$.
      Define $//_\gamma = P(h_x \gamma h_y^{-1})$.
   If $\mu \in \Gamma$ also and $\gamma(1) = \mu(0)$ and $\mu(1) =z$  then
   \beq
   //_{\gamma \mu} = P(h_x \gamma \mu h_z^{-1})
   = P((h_x \gamma h_y^{-1})(h_y \mu h_z^{-1})) = //_\gamma \ //_\mu
   \eeq
   by 2).  Moreover $//_{\gamma \gamma^{-1}}
   = P((h_x\gamma)( \gamma^{-1}h_x^{-1}))
   = P(h_x\gamma)(h_x\gamma)^{-1})= I_\V$ by  3).
   Thus items ii) and iii) are verified. Item i) is clear. Moreover
   $//_{h_x} = P(h_{x_0}h_x (h_x)^{-1}) = I_\V$.  For any other parallel transport system
  $///$ with the stated properties one has  $///_\gamma = ///_{h_x}\  ///_\gamma\ ///_{h_y^{-1}}
  =///_{h_x \gamma h_y^{-1}} = P(h_x \gamma  h_y^{-1}) =//_\gamma$.
   \end{proof}

}
\end{notation}

   \subsection{Convergence on loops} \label{secloops}
\begin{notation}{\rm  Let $M$ be the closure of a bounded open set in $\R^3$
with smooth convex boundary.  Let $x_0 \in M^{int}$ and define
$\Gamma $ and $\Gamma_0$ as in Notation \ref{notpaths}.
In our simple setting a tangent vector to $M$ at a point $\gamma(s)$ is just
a vector $u(s) \in \R^3$. We will denote by $T_\gamma(\Gamma)$ any
piecewise $C^1$ function $u:[0,1]\rightarrow \R^3$. If $\gamma \in \Gamma_0$
then we will write $u \in T_{\gamma}(\Gamma_0)$ if $u$ is in piecewise
$C^1([0,1]; \R^3)$  and $u(0) = u(1) =0$. For $u \in T_\gamma (\Gamma)$
define
\beq
\| u \| = \sup_{0\le s \le 1} |u(s)|_{\R^3} + \sup_{0\le s \le 1} |u'(s)|_{\R^3}. \label{LTB1}
\eeq
For a curve $[a,b] \ni t\rightarrow \gamma_t(\cdot) \in \Gamma$
we take its length to be
$\int_a^b \|\p_t \gamma_t\| dt$ as usual and define the distance  $d_1(\gamma_1, \gamma_2)$
to be the infimum of lengths of curves joining $\gamma_1$ to $ \gamma_2$
in the manifold $\Gamma$.
$\Gamma$ and $\Gamma_0$ are (incomplete) metric spaces  in this metric.
}
\end{notation}

\begin{definition}{\rm For a smooth End $\V$ valued
connection form $A$ on $M^{int}$ and a piecewise
$C^1$ path $\gamma$ in $M$ the parallel transport operator along $\gamma$
is defined by the solution to the ordinary differential equation
\beq
g(t)^{-1} dg(t)/dt = A\<  d \gamma(t)/dt \>,\ \ \    g(0) = I_{\V}.
\eeq
We put $//_\gamma^A = g(1)$. Properties i), ii), iii) of Notation \ref{notpaths}
are well known for this map.
}
\end{definition}

In this section we are going to prove that for any locally bounded strong solution
of the Yang-Mills heat equation satisfying  Neumann or Dirichlet boundary conditions,
 and for any sequence of times
going to infinity,
there is a subsequence $t_j$ and gauge transforms $k_j$
such that the connection forms $A(t_j)^{k_j}$ are smooth and the
parallel transport operators $//_\gamma^{A(t_j)^{k_j}} $
converge, as operators from $\V$ to $\V$, to a map $P$ on $\Gamma_0$
satisfying all the conditions listed in Lemma \ref{loops-paths}.

\begin{theorem} \label{LTB} Suppose that $M$ is a compact convex subset of $\R^3$
with smooth boundary.
Let $A(\cdot)$ be a locally bounded strong solution of the
 Yang-Mills heat equation \eref{ymh10} over $[0,\infty)$  satisfying Dirichlet or
 Neumann boundary conditions.  Choose $x_0 \in M^{int}$.
 Suppose that $\{t_k\}$ is a sequence of times going to $\infty$.

 There is a function $P:\Gamma_0 \rightarrow End\ \V$ satisfying
 conditions {\rm 1), 2), 3)}
 of Lemma \ref{loops-paths},
     a subsequence $t_j$
and functions  $k_j \in W_1(M;K)$ such that

a$)$\ \  $k_j^{-1} dk_j \in W_1(M;\kf)$ for all $j$,

b$)$\ \  $\alpha_j \equiv A(t_j)^{k_j}$ is in $C^\infty(M;\L^1\otimes \kf)$ and,

c$)$ for each  $\gamma \in \Gamma_0$
the operators $//_\gamma^{\alpha_j}$
converge to $P(\gamma)$  as $j\rightarrow \infty$.

\noindent
Moreover

d$)$ $P$ is continuous on $\Gamma_0$ in the metric $d_1$.

In particular, given a piecewise $C^1$ homotopy of $M^{int}$ onto $x_0$,
there is a parallel transport system on $\Gamma$ that extends $P$.
\end{theorem}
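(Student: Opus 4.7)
The plan is to extract a subsequence along which suitably gauge-transformed connections converge uniformly to a smooth limit on $M$, and then take $P(\gamma)$ to be the parallel transport of that limit around the loop $\gamma$. First, Theorem \ref{thmAA3} applied to the given locally bounded strong solution produces a number $\tau > 0$, depending only on $\|B(0)\|_2$, such that $\|B(t)\|_\infty$ is uniformly bounded by some constant $R$ for all $t \ge \tau$. Discarding finitely many terms we may assume $t_k \ge \tau$ for all $k$, so the curvatures $B(t_k)$ are uniformly bounded in $L^\infty(M)$. Standard parabolic regularity for the Yang-Mills heat equation also gives that each $A(t_k)$ is smooth on $M$.

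Next, I would invoke an Uhlenbeck-type gauge fixing, adapted to the Dirichlet or Neumann boundary condition at hand, to produce $k_j \in W_1(M;K)$ with $k_j^{-1} d k_j \in W_1(M;\kf)$ such that $\alpha_j := A(t_j)^{k_j}$ satisfies a Coulomb-type gauge condition with $\|\alpha_j\|_{W_{1,p}}$ for some $p>3$ bounded uniformly in $j$. Since $|B_{\alpha_j}|$ equals $|B(t_j)|$ pointwise by gauge invariance, $\|B_{\alpha_j}\|_\infty \le R$, and this together with the Coulomb condition gives, by elliptic bootstrapping, smoothness of each $\alpha_j$ and uniform $C^\ell$-bounds for every $\ell$; Arzel\`a--Ascoli then provides a further subsequence (still denoted $\alpha_j$) converging in $C^\ell$ on $M$ to a smooth connection $\alpha_\infty$. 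This gauge-fixing step is the main technical obstacle: the classical Uhlenbeck theorem needs a smallness condition on $\|B\|_{L^{n/2}}$ that is not automatic from an $L^\infty$-bound on a fixed domain, so a practical route is to cover $M$ by sufficiently small coordinate (half-)balls on which the local $L^{n/2}$ curvature is small, apply local Uhlenbeck gauges, and patch via transition functions; the boundary condition on $A$ can be carried along via half-ball variants.

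Finally, set $P(\gamma) := //_\gamma^{\alpha_\infty}$ for $\gamma \in \Gamma_0$. For each fixed piecewise $C^1$ loop $\gamma$, the parallel transport $//_\gamma^{\alpha_j}$ is the time-one value of the linear ODE $g'(s) = -\alpha_j\<\gamma'(s)\> g(s)$, $g(0) = I_\V$, so uniform convergence $\alpha_j \to \alpha_\infty$ on $M$ yields $//_\gamma^{\alpha_j} \to P(\gamma)$ by continuous dependence of ODE solutions on their coefficients. Conditions 1), 2), 3) of Lemma \ref{loops-paths} are standard properties of parallel transport for a smooth connection, and $d_1$-continuity of $P$ follows from continuous dependence of the same ODE on $\gamma$, since the coefficient $s \mapsto \alpha_\infty(\gamma(s))\<\gamma'(s)\>$ depends continuously on $\gamma$ in the $d_1$ metric. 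The existence of an extension of $P$ to a parallel transport system on all of $\Gamma$ is then immediate from Lemma \ref{loops-paths}, applied with any piecewise $C^1$ contraction of $M^{int}$ to $x_0$, which exists because $M$ is convex.
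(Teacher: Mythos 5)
Your proposal takes a genuinely different route from the paper, and it contains a significant gap precisely where you flag the ``main technical obstacle.''

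The paper never tries to make the connections $\alpha_j$ themselves converge. Instead, it constructs a smooth gauge representative $\alpha(t)=A(t)^{k(t)}$ for each $t\ge 1$ via Lemma \ref{lem1} (which uses the ZDS gauge--fixed flow from \cite{CG1}, not Uhlenbeck gauge-fixing), then observes that the uniform bound $b=\sup_{t\ge 1}\|B(t)\|_\infty<\infty$ of Theorem \ref{thmAA} together with the variation-of-parallel-transport identity of Lemma \ref{lem2} yields a $d_0$-Lipschitz bound
\[
\|\,//_\gamma^{\alpha(t)}-//_\eta^{\alpha(t)}\|\le 2bL\,\sup_s|\gamma(s)-\eta(s)|
\]
on each sublevel set $\mathcal C_L$ of the length functional. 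Arzel\`a--Ascoli is then applied to the family of \emph{maps} $\gamma\mapsto //_\gamma^{\alpha(t)}$ on the separable metric space $\mathcal C_L$ (with a diagonal argument over $L$), producing a pointwise limit $P$ that is automatically $d_0$-continuous on each $\mathcal C_L$, hence $d_1$-continuous on $\Gamma_0$. Conditions 1)--3) of Lemma \ref{loops-paths} pass to the pointwise limit directly. No limit connection $\alpha_\infty$ is ever produced, and no uniform higher-order bounds on the $\alpha_j$ are needed.

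Your route --- Uhlenbeck gauge-fixing, patching of local Coulomb gauges over (half-)balls, elliptic bootstrap to $C^\ell$ bounds, and then extraction of a $C^\infty$-limit connection $\alpha_\infty$ --- would, if carried out, prove more (a limiting connection, not merely a limiting parallel transport system). But the step you identify as the obstacle is a real gap here: patched Uhlenbeck gauges on a compact manifold with boundary, compatible with the specific Dirichlet/Neumann boundary conditions, and feeding into a boundary-value elliptic bootstrap, is a nontrivial piece of machinery that is neither available in this paper nor sketched in enough detail to be checked. Moreover, ``standard parabolic regularity'' alone does not give smoothness of $A(t)$ in this setting; that is exactly what Lemma \ref{lem1} (invoking \cite[Theorem 2.13, Corollary 8.4, Theorem 8.15]{CG1}) supplies. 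The paper's approach buys a far more elementary proof: the only uniform estimate required is on $\|B\|_\infty$, and equicontinuity is obtained at the level of holonomies rather than connections, sidestepping Uhlenbeck compactness entirely. If you want to pursue your route, the gauge-fixing and patching with boundary must be made precise; otherwise, the paper's Lemma \ref{lem2} argument is the intended and self-contained path.
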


\begin{Remark}{\rm If $\gamma$ is a closed curve in $M^{int}$
beginning at $x_0$, and
$A$ is a smooth connection form, then for any smooth  function $k: M \rightarrow K$
one has the well known identity.
\beq
//_\gamma^{A^k} = k(x_0)^{-1} (//_\gamma^A)k(x_0)
\eeq
Consequently
\beq
trace\ //_\gamma^{A^k} = trace //_\gamma^A.
\eeq
The function $A \mapsto  trace\ //_\gamma^A$ is therefore fully gauge invariant
and in particular is independent of the choice of gauge transformation
 $k$.  Theorem \ref{LTB} implies then that there exists a sequence of times
going to infinity for which the functions  $\;trace //_\gamma^{A(t_j)}$ converge for all piecewise
 $C^1$ loops $\gamma$ starting at $x_0$. One need not specify gauge transformations $k_j$ for this convergence.
 }
\end{Remark}

The proof of Theorem \ref{LTB} depends on the following lemmas.

           \begin{lemma} \label{lem1}
            Let $A(\cdot)$ be a locally bounded strong solution satisfying Dirichlet or Neumann
 boundary conditions and let $t_1 >0$.
Then there exists a continuous function $k: M \rightarrow K$ such that

a$)$ $k^{-1} dk \in W_1(M)$ and

b$)$ $\alpha\equiv A(t_1)^{k} \in C^\infty(M; \L^1\otimes \kf)$.
\end{lemma}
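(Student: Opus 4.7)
My plan is to produce $k$ via the Zwanziger--Donaldson--Sadun (ZDS) gauge-fixing procedure from \cite{CG1}, applied on a short interval ending at $t_1$. First, I would fix some $\delta \in (0, t_1)$ and consider the fully parabolic, symmetry-broken equation
\[
\partial_s \tilde A(s) = -d_{\tilde A(s)}^* B(\tilde A(s)) + d_{\tilde A(s)} d^* \tilde A(s),
\]
with initial data $\tilde A(0) = A(t_1 - \delta)$ and the same Dirichlet or Neumann boundary conditions as $A(\cdot)$. Since $A(t_1-\delta) \in W_1$ and Theorem \ref{thmAA3} already controls $\|B(t_1-\delta)\|_{\infty}$ and $\|A'(t_1-\delta)\|_{\infty}$, the standard parabolic theory used in \cite{CG1} supplies a classical solution $\tilde A(\cdot)$ on $[0,\delta]$ with $\tilde A(s) \in C^\infty(M;\Lambda^1\otimes \kf)$ for every $s \in (0,\delta]$.

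The ZDS machinery from \cite{CG1} next produces a continuous time-dependent gauge transformation $s \mapsto g(s)$ with $g(0) = I$, obtained by solving an ODE in $s$ driven by $d^* \tilde A(s)$, such that $\tilde A(s)^{g(s)}$ solves the genuine Yang-Mills heat equation \eqref{ymh10} with initial data $A(t_1 - \delta)$. By the uniqueness of locally bounded strong solutions proven in \cite{CG1}, we must have $\tilde A(s)^{g(s)} = A(t_1 - \delta + s)$ for all $s \in [0, \delta]$. Setting $s = \delta$ and $k := g(\delta)^{-1}$ then gives
\[
A(t_1)^{k} = \tilde A(\delta) \in C^\infty(M;\Lambda^1\otimes \kf),
\]
which is conclusion (b). For (a), since both $A(t_1)$ and $A(t_1)^k$ lie in $W_1(M)$, the defining identity $k^{-1} dk = A(t_1)^k - k^{-1}A(t_1)k$ realizes $k^{-1} dk$ as a difference of $W_1$ objects, yielding $k^{-1} dk \in W_1(M)$; continuity of $k$ follows from the continuity in $s$ of $g(s)$ together with Sobolev embedding in dimension three.

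The main technical hurdle is justifying the application of the ZDS procedure at this regularity with the specified boundary conditions: one must check that the symmetry-breaking term $d_{\tilde A} d^* \tilde A$ is compatible with the Dirichlet or Neumann conditions and does not spoil them along the flow, and that the ODE defining $g(s)$ yields a gauge transformation with the claimed Sobolev regularity. These compatibilities are precisely what is established in \cite{CG1} for the setting of this paper, so the proof of Lemma \ref{lem1} essentially amounts to invoking those ingredients and extracting the gauge transformation $k = g(\delta)^{-1}$ at the single time $s = \delta$.
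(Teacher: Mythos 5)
Your proposal follows essentially the same route as the paper: run the ZDS symmetry-broken parabolic flow from a time $t_0 = t_1-\delta$ slightly before $t_1$, obtain a smooth solution $C(\cdot)$ (the paper's notation) on $[t_0, t_1]$ with $C(t_0)=A(t_0)$, use the ZDS gauge transformation $g$ and the uniqueness theorem \cite[Theorem 8.15]{CG1} to conclude $A(t_1) = C(t_1)^g$, and then set $k=g^{-1}$. Two small issues are worth flagging. First, the gauge-breaking term in the ZDS equation should carry a negative sign, i.e.\ $\partial_s\tilde A = -\big(d_{\tilde A}^* B_{\tilde A} + d_{\tilde A} d^*\tilde A\big)$ as in \cite[Equ (2.14)]{CG1}; with the $+d_{\tilde A}d^*\tilde A$ you wrote, the principal symbol is not parabolic. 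Second, your derivation of part (a) from the identity $k^{-1}dk = A(t_1)^k - k^{-1}A(t_1)k$ is circular: to see that $k^{-1}A(t_1)k$ lies in $W_1$ one needs a priori control of $\nabla k$, which is precisely what one is trying to prove. The paper sidesteps this by citing \cite[Corollary 8.4]{CG1}, which directly supplies a continuous $g$ with $g^{-1}dg\in W_1$ intertwining $C(t_1)$ and $A(t_1)$; the $W_1$ regularity of $k^{-1}dk = g\,d(g^{-1})$ then comes from that result rather than from algebraic manipulation. (Also, for a uniform ZDS existence time $T$ the paper invokes $\sup_{0<t\le t_1}\|A(t)\|_{H_1}<\infty$ from \cite[Corollary 9.3]{CG1}, rather than the $L^\infty$ bounds on $B$ and $A'$ from Theorem \ref{thmAA3}; the latter are not what the short-time existence theorem for the ZDS equation actually consumes, though your argument can be patched by using continuity of $A(\cdot)$ in $W_1$ on $(0,t_1]$.)
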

       \begin{proof}  The proof depends heavily on results in \cite{CG1}.
       From \cite[Corollary 9.3]{CG1}
  it follows that
        $\sup_{0<t \le t_1} \|A(t)\|_{H_1} < \infty$. Therefore, by \cite[Theorem 2.13]{CG1} there exists $T>0$ such
that, for any $t_0 \in (0, t_1)$, the parabolic equation
\beq
(\p/\p t)C = -(d_C^* B_C + d_C d^*C), t >0, \ \
         C(0) = A_0 .                                             \label{ST11}
\eeq
\cite[Equ (2.14)]{CG1} has a solution $C(\cdot)$ on the interval $[t_0, t_0 + T]$,
with $C(t_0) = A(t_0)$.  Pick $t_0 \in (0,t_1)$ such that $t_1 < t_0 + T$. \cite[Corollary 8.4]{CG1}
then ensures that there exists a continuous function $ g: M\rightarrow K$ such that $g^{-1} dg \in W_1$ and for which   $A(t_1) = C(t_1)^g$.  Since $C(t_1) \in C^{\infty}$ we may take $k = g^{-1}$.
        Take note here that the equality  $A(t_1) = C(t_1)^g$  relies on the uniqueness
        theorem, \cite[Theorem 8.15]{CG1},  which is applicable to the
         restriction of  $A(\cdot)$ to $[t_0,t_1]$.
    \end{proof}

\begin{lemma} \label{lem2}
Let $\gamma : [0,1]\rightarrow M$ be a piecewise $C^1$ closed curve starting at $x_0$.
Let $ u:[0,1]\rightarrow T(M)$ be a $C^1$ vector field along $\gamma$ for which
$u(0) = u(1) =0$.
That is, $u(s) \in T_{\gamma(s)}(M), 0 \le s \le 1$.
Let $\alpha$ be a smooth connection form on $M$ with bounded curvature $B$.
Then
\beq
\| \p_u //_\gamma^\alpha \|_{End \ \V}
 \le \| B \|_\infty  \sup_{0\le s \le1} |u(s)|\ \  \text{Length}(\gamma). \label{LTB5}
\eeq
\end{lemma}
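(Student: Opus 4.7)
The plan is to derive and then estimate the non-abelian Stokes variation formula
\[
\partial_u //_\gamma^\alpha \;=\; \int_0^1 //_{\gamma|_{[0,s]}}^\alpha \, B_{\gamma(s)}\bigl(u(s),\dot\gamma(s)\bigr)\, //_{\gamma|_{[s,1]}}^\alpha\, ds, \qquad(\ast)
\]
and then take operator norms, exploiting that a $\kf$-valued connection produces parallel transports lying in $K$, which acts by isometries of $\V$.

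To set up $(\ast)$, use $u(0)=u(1)=0$ to pick a $C^1$ family of piecewise $C^1$ loops $\gamma_\tau$ in $M^{int}$, $|\tau|<\epsilon_0$, with $\gamma_0=\gamma$, $\partial_\tau\gamma_\tau|_{\tau=0}=u$, and $\gamma_\tau(0)=\gamma_\tau(1)=x_0$ for all $\tau$. Let $g_\tau(s)=//_{\gamma_\tau|_{[0,s]}}^\alpha$, so that $\partial_s g_\tau = g_\tau\, \alpha(\partial_s\gamma_\tau)$ with $g_\tau(0)=I_\V$. Abbreviate $g=g_0$, $\dot\gamma=\partial_s\gamma$, $\phi(s)=\alpha_{\gamma(s)}(u(s))$, $K(s)=g(s)^{-1}\partial_\tau g_\tau(s)|_{\tau=0}$, and $\tilde K = K-\phi$. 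Differentiating the ODE in $\tau$ at $\tau=0$, then expanding $\delta[\alpha(\dot\gamma)] = d\alpha(u,\dot\gamma)+\phi'$ by a standard Maurer--Cartan manipulation, and finally using $B = d\alpha+(1/2)[\alpha\wedge\alpha]$ together with the identity $(1/2)[\alpha\wedge\alpha](u,\dot\gamma) = [\alpha(u),\alpha(\dot\gamma)]$, one finds on each $C^1$ piece of $\gamma$ that
\[
\tilde K'(s) \;=\; [\tilde K(s),\, \alpha(\dot\gamma(s))] + B(u(s),\dot\gamma(s)).
\]
Using $g'=g\alpha(\dot\gamma)$ and $(g^{-1})'=-\alpha(\dot\gamma)g^{-1}$, a short direct calculation shows that conjugation by $g$ exactly cancels the commutator term:
\[
(g\,\tilde K\, g^{-1})'(s) \;=\; g(s)\, B(u(s),\dot\gamma(s))\, g(s)^{-1}.
\]
Integrating over each $C^1$ piece and telescoping across the break points of $\gamma$ (using continuity of $g\tilde K g^{-1}$ there), and observing that $\tilde K(0)=0$ and $g(1)\tilde K(1)g(1)^{-1} = \partial_u //_\gamma^\alpha \cdot g(1)^{-1}$ (because $g_\tau(0)\equiv I$, $\phi(0)=0$ and $\phi(1)=0$), one obtains $(\ast)$ after identifying $g(s)^{-1}g(1) = //_{\gamma|_{[s,1]}}^\alpha$.

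To finish, because $\alpha$ is $\kf$-valued and $g_\tau(0)=I\in K$, the ODE keeps each $g_\tau(s)$ inside the compact subgroup $K$ of the orthogonal/unitary group of $End\ \V$, so every parallel transport operator in $(\ast)$ is an isometry of $\V$. Taking operator norms in $(\ast)$ yields
\[
\|\partial_u //_\gamma^\alpha\|_{End\ \V} \;\le\; \int_0^1 |B(u(s),\dot\gamma(s))|_\kf\, ds \;\le\; \|B\|_\infty \sup_{0\le s\le 1}|u(s)| \int_0^1 |\dot\gamma(s)|\, ds,
\]
which is \eqref{LTB5} since the last integral is the length of $\gamma$. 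The only substantive step is the derivation of $(\ast)$; the curvature $B$ emerges precisely because the linear-in-$\alpha$ Maurer--Cartan term combines with the quadratic bracket term to reconstitute $B = d\alpha+(1/2)[\alpha\wedge\alpha]$. The piecewise handling is automatic since $u$ is $C^1$ across the break points of $\gamma$ and $g\tilde K g^{-1}$ is continuous there.
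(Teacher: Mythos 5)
Your proof is correct and follows essentially the same route as the paper: the crux is the variation formula for parallel transport, $\partial_u //_\gamma^\alpha = \int_0^1 //_{\gamma|_{[0,s]}}^\alpha\, B(u(s),\dot\gamma(s))\, //_{\gamma|_{[s,1]}}^\alpha\, ds$, which the paper simply cites from Gross's Poincar\'e lemma paper (\cite[Equ.\ (2.6)]{G2}), whereas you derive it from scratch via the $\tilde K = K - \phi$ trick and the cancellation $(g\tilde K g^{-1})' = g\,B(u,\dot\gamma)\,g^{-1}$. After that, both you and the paper take operator norms, use that each $//$ lies in the compact group $K$ and hence is an isometry of $\V$, bound $|B(u,\dot\gamma)| \le \|B\|_\infty |u||\dot\gamma|$, and integrate to get $\|B\|_\infty \sup_s|u(s)|\,\mathrm{Length}(\gamma)$.
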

        \begin{proof} Since $u(0) = u(1) = 0$ the identity  \cite[Equ (2.6)]{G2}
  shows that
\begin{align*}
\Big\|\p_u //_\gamma^\alpha\Big\|_{End \ \V}
&= \Big\|\int_0^1 //_{\gamma|_0^s}^\alpha
       \< B(\gamma(s)), \gamma'(s) \wedge u(s) \> ds \Big\|_{End \ \V}\\
 & \le \int_0^1 \| //_{\gamma|_0^s}^\alpha  \|_{End \ \V}
                    \|B \|_\infty |\gamma'(s) \wedge u(s) \> |_{\Lambda^2(\R^3)}     ds   \\
 & \le \|B \|_\infty  \int_0^1|\gamma'(s)| | u(s)| ds   \\
 &  \le \|B \|_\infty (\sup_{0 \le s \le 1} |u(s)| ) \int_0^1 |\gamma'(s)| ds,
\end{align*}
which is \eref{LTB5}.
\end{proof}

\bigskip
\noindent

\begin{proof}[Proof of Theorem \ref{LTB}]
 For each $\;t \ge 1$ we have constructed a gauge
  function  $\;k(t): M \rightarrow K$
such that $\alpha(t) \equiv A(t)^{k(t)}$ is a $C^\infty$ connection form. Denote by $B_{\alpha}(t)$ the curvature of the connection $\alpha(t)$.
 Let $\gamma$  and $\eta$ be in $\Gamma_0$ and of length at most $L$.
Define $u(s) = \gamma(s) - \eta(s)$ and let
 $\gamma_\sigma(s) = \eta (s)  + \sigma u(s)$.
        Then $\gamma_\sigma$ lies in $M^{int}$ for small $\sigma$
  and $\p_\sigma \gamma_\sigma = u$.  Let $b = \sup_{t\ge 1} \|B(t)\|_\infty$.
 We know that $ b < \infty$ by Theorem \ref{thmAA}. Since
$ \| B_{\alpha}(t)\|_ \infty = \| B(t)\|_\infty \le b$,
   Lemma  \ref{lem2} shows that
          \begin{align*}
\|\p_\sigma //_{\gamma_\sigma}^{\alpha(t)} \|_{End\ \V}
&\le b \sup_{0\le s \le 1}|\gamma(s) - \eta(s)|\ \cdot \text{Length}(\gamma_\sigma) \\
&\le b \sup_{0\le s \le 1}|\gamma(s) - \eta(s)| \
          \cdot [\text{Length}(\gamma) + \text{Length}(\eta) ] \\
    &\le 2bL  \sup_{0\le s \le 1}|\gamma(s) - \eta(s)|.
           \end{align*}
Hence
      \begin{align}
\| //_\gamma^{\alpha(t)} - //_{\eta}^{\alpha(t)} \|_{End\ V}
&\le \int_0^1 \| \p_\sigma //_{\gamma_\sigma}^{\alpha(t)} \|_{End\ V} d\sigma \notag \\
&\le 2bL  \sup_{0\le s \le 1}|\gamma(s) - \eta(s)| \label{LTB10}.
       \end{align}
    An Arzela-Ascoli type diagonalization argument
 shows that a pointwise \linebreak
   bounded, equicontinuous  sequence of
 functions on a separable metric space $S$ to a compact subset of $End \ V$
 contains a subsequence that converges
 pointwise to a continuous function (and of course the convergence is uniform on compact subsets.) Taking the metric space to be the set
 $\mathcal C_L \equiv \{\gamma \in \Gamma_0: \text{Length}( \gamma)  \le L\}$
  with the metric
 $d_0(\gamma, \eta) = \sup_{0\le s \le1} |\gamma(s) - \eta(s)|$, and taking the
 functions to be $\{ //_\gamma^{\alpha(t)}\}$  with ranges
  contained in $K \subset End\ V$,
 the  estimate \eref{LTB10}  shows that we may apply this Arzela-Ascoli
 argument and conclude that for any sequence of times increasing to $\infty$
there is a subsequence $t_j\uparrow \infty$
 for which $//_{\gamma}^{\alpha(t_j)}$ converges in operator norm for each
 curve $\gamma \in \mathcal C_L$.
  We may allow $L \uparrow \infty$ through a sequence and use
  diagonalization again to conclude
 that there is a function $ P:\Gamma_0 \rightarrow End\ V$
 such that $P(\gamma) = \lim_{j\to \infty} //_\gamma^{\alpha(t_j)}$ in operator norm
 for all $\gamma \in \Gamma_0$.  By \eref{LTB10}  $P|\mathcal C_L$ is continuous
 in the norm $d_0$ for each $L <\infty$ and therefore is continuous on
  $\Gamma_0$  in the metric $d_1$. The properties 1), 2), 3) of
  Lemma \ref{loops-paths} follow from the corresponding properties of the
  maps $\gamma \mapsto //_{\gamma}^{\alpha(t_j)}$. The map $P$
  therefore extends, by Lemma \ref{loops-paths}, to a parallel transport
  system on paths when a piecewise $C^1$ homotopy of $M^{int}$ to $x_0$
  is specified. The extension is unique in the sense given in Lemma \ref{loops-paths}.
 \end{proof}

\begin{acknowledgement}
N. Charalambous would like to thank the Asociaci\'{o}n Mexicana de Cultura A.C.
  \end{acknowledgement}

\begin{bibdiv}
\begin{biblist}

\bib{Bal}{article}{
    AUTHOR = {Balaban, T.},
     TITLE = {Convergent renormalization expansions for lattice gauge
              theories},
   JOURNAL = {Comm. Math. Phys.},
  FJOURNAL = {Communications in Mathematical Physics},
    VOLUME = {119},
      YEAR = {1988},
    NUMBER = {2},
     PAGES = {243--285},
      ISSN = {0010-3616},
     CODEN = {CMPHAY},
   MRCLASS = {81E25 (81E08 81E15)},
  MRNUMBER = {968698 (90b:81104)},
MRREVIEWER = {Claus Montonen},
       URL = {http://projecteuclid.org/getRecord?id=euclid.cmp/1104162401},
}

\bib{Cha4}{article}{
    AUTHOR = {Charalambous, Nelia},
     TITLE = {Eigenvalue estimates for the {B}ochner {L}aplacian and
              harmonic forms on complete manifolds},
   JOURNAL = {Indiana Univ. Math. J.},
  FJOURNAL = {Indiana University Mathematics Journal},
    VOLUME = {59},
      YEAR = {2010},
    NUMBER = {1},
     PAGES = {183--206},
      ISSN = {0022-2518},
     CODEN = {IUMJAB},
   MRCLASS = {58J60 (35P05 35R01 58J50)},
  MRNUMBER = {2666477 (2011j:58060)},
MRREVIEWER = {Julie Rowlett},
       DOI = {10.1512/iumj.2010.59.3770},
       URL = {http://dx.doi.org/10.1512/iumj.2010.59.3770},
}

\bib{CG1}{article}{
    AUTHOR = {Charalambous, Nelia},
     AUTHOR = {Gross, Leonard},
     TITLE = {The {Y}ang-{M}ills heat semigroup on three-manifolds with
              boundary},
   JOURNAL = {Comm. Math. Phys.},
  FJOURNAL = {Communications in Mathematical Physics},
    VOLUME = {317},
      YEAR = {2013},
    NUMBER = {3},
     PAGES = {727--785},
      ISSN = {0010-3616},
     CODEN = {CMPHAY},
   MRCLASS = {58J35 (58J32 81T13)},
  MRNUMBER = {3009723},
MRREVIEWER = {Thomas Krainer},
       DOI = {10.1007/s00220-012-1558-0},
       URL = {http://dx.doi.org/10.1007/s00220-012-1558-0},
}

\bib{CFKS}{book}{
    AUTHOR = {Cycon, H. L.},
     AUTHOR = {Froese, R. G.},
      AUTHOR = { Kirsch, W. },
      AUTHOR = {Simon, B.},
     TITLE = {Schr\"odinger operators with application to quantum mechanics
              and global geometry},
    SERIES = {Texts and Monographs in Physics},
   EDITION = {Study},
 PUBLISHER = {Springer-Verlag},
   ADDRESS = {Berlin},
      YEAR = {1987},
     PAGES = {x+319},
      ISBN = {3-540-16758-7},
   MRCLASS = {35-02 (35J10 47F05 58G40 81C10)},
  MRNUMBER = {MR883643 (88g:35003)},
MRREVIEWER = {M. Demuth},
}

\bib{Do1}{article}{
    AUTHOR = {Donaldson, S. K.},
     TITLE = {Anti self-dual {Y}ang-{M}ills connections over complex
              algebraic surfaces and stable vector bundles},
   JOURNAL = {Proc. London Math. Soc. (3)},
  FJOURNAL = {Proceedings of the London Mathematical Society. Third Series},
    VOLUME = {50},
      YEAR = {1985},
    NUMBER = {1},
     PAGES = {1--26},
      ISSN = {0024-6115},
     CODEN = {PLMTAL},
   MRCLASS = {58E15 (14F99 53C05 57R99)},
  MRNUMBER = {MR765366 (86h:58038)},
MRREVIEWER = {S. Ramanan},
}

\bib{EL}{article}{
    AUTHOR = {Einav, Amit},
    AUTHOR = {Loss, Michael},
     TITLE = {Sharp trace inequalities for fractional {L}aplacians},
   JOURNAL = {Proc. Amer. Math. Soc.},
  FJOURNAL = {Proceedings of the American Mathematical Society},
    VOLUME = {140},
      YEAR = {2012},
    NUMBER = {12},
     PAGES = {4209--4216},
      ISSN = {0002-9939},
     CODEN = {PAMYAR},
   MRCLASS = {35A23 (26D10)},
  MRNUMBER = {2957211},
MRREVIEWER = {Jean Van Schaftingen},
       DOI = {10.1090/S0002-9939-2012-11380-2},
       URL = {http://dx.doi.org/10.1090/S0002-9939-2012-11380-2},
}

\bib{G2}{article}{
    AUTHOR = {Gross, Leonard},
     TITLE = {A {P}oincar\'e lemma for connection forms},
   JOURNAL = {J. Funct. Anal.},
  FJOURNAL = {Journal of Functional Analysis},
    VOLUME = {63},
      YEAR = {1985},
    NUMBER = {1},
     PAGES = {1--46},
      ISSN = {0022-1236},
     CODEN = {JFUAAW},
   MRCLASS = {53C80 (53C05 58A10 81E20)},
  MRNUMBER = {795515 (87a:53110)},
MRREVIEWER = {Ng{\^o} Van Qu{\^e}},
       DOI = {10.1016/0022-1236(85)90096-5},
       URL = {http://dx.doi.org/10.1016/0022-1236(85)90096-5},
}

\bib{HT1}{article}{
    AUTHOR = {Hong, Min-Chun},
     AUTHOR = {Tian, Gang},
     TITLE = {Global existence of the {$m$}-equivariant {Y}ang-{M}ills flow
              in four dimensional spaces},
   JOURNAL = {Comm. Anal. Geom.},
  FJOURNAL = {Communications in Analysis and Geometry},
    VOLUME = {12},
      YEAR = {2004},
    NUMBER = {1-2},
     PAGES = {183--211},
      ISSN = {1019-8385},
   MRCLASS = {53C44 (53C07 58E15)},
  MRNUMBER = {MR2074876 (2005e:53103)},
MRREVIEWER = {J{\"u}rgen Eichhorn},
}

\bib{HT2}{article}{
    AUTHOR = {Hong, Min-Chun},
     AUTHOR = {Tian, Gang},
     TITLE = {Asymptotical behaviour of the {Y}ang-{M}ills flow and singular
              {Y}ang-{M}ills connections},
   JOURNAL = {Math. Ann.},
  FJOURNAL = {Mathematische Annalen},
    VOLUME = {330},
      YEAR = {2004},
    NUMBER = {3},
     PAGES = {441--472},
      ISSN = {0025-5831},
     CODEN = {MAANA},
   MRCLASS = {53C44 (53C07)},
  MRNUMBER = {MR2099188 (2006h:53063)},
}
\bib{HS}{article}{
    AUTHOR = {Hundertmark, Dirk},
    AUTHOR = {Simon, Barry},
     TITLE = {A diamagnetic inequality for semigroup differences},
   JOURNAL = {J. Reine Angew. Math.},
  FJOURNAL = {Journal f\"ur die Reine und Angewandte Mathematik},
    VOLUME = {571},
      YEAR = {2004},
     PAGES = {107--130},
      ISSN = {0075-4102},
     CODEN = {JRMAA8},
   MRCLASS = {47D08 (35J10 47F05 81Q10)},
  MRNUMBER = {MR2070145 (2005d:47078)},
MRREVIEWER = {Michael J. Gruber},
}

\bib{Jac}{book}{
    AUTHOR = {Jackson, John David},
     TITLE = {Classical electrodynamics},
   EDITION = {Third},
 PUBLISHER = {John Wiley \& Sons Inc.},
   ADDRESS = {New York},
      YEAR = {1999},
     PAGES = {xxii+808},
   MRCLASS = {78.00},
  MRNUMBER = {MR0436782 (55 \#9721)},
MRREVIEWER = {T. Kahan},
}

\bib{LM}{book}{
    AUTHOR = {Lions, J.-L.},
    AUTHOR = {Magenes, E.},
     TITLE = {Non-homogeneous boundary value problems and applications.
              {V}ol. {I}},
      NOTE = {Translated from the French by P. Kenneth,
              Die Grundlehren der mathematischen Wissenschaften, Band 181},
 PUBLISHER = {Springer-Verlag, New York-Heidelberg},
      YEAR = {1972},
     PAGES = {xvi+357},
   MRCLASS = {35JXX (35KXX 35LXX 46E35)},
  MRNUMBER = {0350177 (50 \#2670)},
}

\bib{L1}{article}{
    AUTHOR = {L{\"u}scher, Martin},
     TITLE = {Trivializing maps, the {W}ilson flow and the {HMC} algorithm},
   JOURNAL = {Comm. Math. Phys.},
  FJOURNAL = {Communications in Mathematical Physics},
    VOLUME = {293},
      YEAR = {2010},
    NUMBER = {3},
     PAGES = {899--919},
      ISSN = {0010-3616},
     CODEN = {CMPHAY},
   MRCLASS = {81T25 (81T13 81T17 81T80)},
  MRNUMBER = {2566166 (2011d:81217)},
MRREVIEWER = {Axel Maas},
       DOI = {10.1007/s00220-009-0953-7},
       URL = {http://dx.doi.org/10.1007/s00220-009-0953-7},
}

\bib{L2}{article}{
    AUTHOR = {L{\"u}scher, Martin},
     TITLE = {Properties and uses of the {W}ilson flow in lattice {QCD}},
   JOURNAL = {J. High Energy Phys.},
  FJOURNAL = {Journal of High Energy Physics},
      YEAR = {2010},
    NUMBER = {8},
     PAGES = {071, 18},
      ISSN = {1029-8479},
   MRCLASS = {81V05 (81T25)},
  MRNUMBER = {2756058 (2011m:81308)},
       DOI = {10.1007/JHEP08(2010)071},
       URL = {http://dx.doi.org/10.1007/JHEP08(2010)071},
}

\bib{L3}{article}{
    AUTHOR = {L{\"u}scher, Martin},
     TITLE = {Chiral symmetry and the {Y}ang-{M}ills gradient flow},
   JOURNAL = {J. High Energy Phys.},
  FJOURNAL = {Journal of High Energy Physics},
      YEAR = {2013},
    NUMBER = {4},
     PAGES = {123, front matter + 39},
      ISSN = {1126-6708},
   MRCLASS = {81T15 (81T25 81V05)},
  MRNUMBER = {3065842},
}

\bib{LW}{article}{
    AUTHOR = {L{\"u}scher, Martin},
    AUTHOR = {Weisz, Peter},
     TITLE = {Perturbative analysis of the gradient flow in non-abelian
              gauge theories},
   JOURNAL = {J. High Energy Phys.},
  FJOURNAL = {Journal of High Energy Physics},
      YEAR = {2011},
    NUMBER = {2},
     PAGES = {051, i, 22},
      ISSN = {1029-8479},
   MRCLASS = {81T25 (81T13 81V05)},
  MRNUMBER = {2820807},
       DOI = {10.1007/JHEP02(2011)051},
       URL = {http://dx.doi.org/10.1007/JHEP02(2011)051},
}

\bib{Pol1}{article}{
    AUTHOR = {Polyakov, A. M.},
     TITLE = {String representations and hidden symmetries for gauge fields},
   JOURNAL = {Phys. Lett.},
  FJOURNAL = {Physics Letters},
    VOLUME = {82B},
      YEAR = {1979},
    NUMBER = {2},
     PAGES = {247--250},
      ISSN = {0029-5582},
     CODEN = {NUPBBO},
   MRCLASS = {81G05 (81E99)},
}

\bib{Pol2}{article}{
    AUTHOR = {Polyakov, A. M.},
     TITLE = {Gauge fields as rings of glue},
   JOURNAL = {Nuclear Phys. B},
  FJOURNAL = {Nuclear Physics. B},
    VOLUME = {164},
      YEAR = {1980},
    NUMBER = {1},
     PAGES = {171--188},
      ISSN = {0029-5582},
     CODEN = {NUPBBO},
   MRCLASS = {81G05 (81E99)},
  MRNUMBER = {561638 (81c:81060)},
MRREVIEWER = {Jorge Andr{\'e} Swieca},
       DOI = {10.1016/0550-3213(80)90507-6},
       URL = {http://dx.doi.org/10.1016/0550-3213(80)90507-6},
}

\bib{Po}{book}{
    AUTHOR = {Poor, Walter A.},
     TITLE = {Differential geometric structures},
 PUBLISHER = {McGraw-Hill Book Co.},
   ADDRESS = {New York},
      YEAR = {1981},
     PAGES = {xiii+338},
      ISBN = {0-07-050435-0},
   MRCLASS = {53-01 (53-02 53C21)},
  MRNUMBER = {647949 (83k:53002)},
MRREVIEWER = {N. J. Hitchin},
}

\bib{Ra}{article}{
    AUTHOR = {R{\aa}de, Johan},
     TITLE = {On the {Y}ang-{M}ills heat equation in two and three
              dimensions},
   JOURNAL = {J. Reine Angew. Math.},
  FJOURNAL = {Journal f\"ur die Reine und Angewandte Mathematik},
    VOLUME = {431},
      YEAR = {1992},
     PAGES = {123--163},
      ISSN = {0075-4102},
     CODEN = {JRMAA8},
   MRCLASS = {58E15 (53C07 58G11)},
  MRNUMBER = {MR1179335 (94a:58041)},
MRREVIEWER = {Dennis M. DeTurck},
}

\bib{Sa}{article}{
    AUTHOR = {Sadun, Lorenzo Adlai},
     TITLE = {Continuum regularized Yang-Mills theory},
   JOURNAL = {Ph. D. Thesis, Univ. of California, Berkeley},
  FJOURNAL = {Communications in Mathematical Physics},
      YEAR = {1987},
      PAGES={67+ pages}
}

\bib{Sei}{book}{
    AUTHOR = {Seiler, Erhard},
     TITLE = {Gauge theories as a problem of constructive quantum field
              theory and statistical mechanics},
    SERIES = {Lecture Notes in Physics},
    VOLUME = {159},
 PUBLISHER = {Springer-Verlag},
   ADDRESS = {Berlin},
      YEAR = {1982},
     PAGES = {v+192},
      ISBN = {3-540-11559-5},
   MRCLASS = {81E08 (81-02 81E25 82A05)},
  MRNUMBER = {MR785937 (86g:81084)},
MRREVIEWER = {Claus Montonen},
}

\bib{Tay3}{book}{
    AUTHOR = {Taylor, Michael E.},
     TITLE = {Partial differential equations. {III}},
    SERIES = {Applied Mathematical Sciences},
    VOLUME = {117},
      NOTE = {Nonlinear equations,
              Corrected reprint of the 1996 original},
 PUBLISHER = {Springer-Verlag},
   ADDRESS = {New York},
      YEAR = {1997},
     PAGES = {xxii+608},
      ISBN = {0-387-94652-7},
   MRCLASS = {35-01 (46N20 47N20 58Gxx)},
  MRNUMBER = {MR1477408 (98k:35001)},
MRREVIEWER = {Luigi Rodino},
}

\bib{Z}{article}{
    AUTHOR = {Zwanziger, Daniel},
     TITLE = {Covariant quantization of gauge fields without {G}ribov
              ambiguity},
   JOURNAL = {Nuclear Phys. B},
  FJOURNAL = {Nuclear Physics. B},
    VOLUME = {192},
      YEAR = {1981},
    NUMBER = {1},
     PAGES = {259--269},
      ISSN = {0029-5582},
     CODEN = {NUPBBO},
   MRCLASS = {81E10 (53C05 58D30)},
  MRNUMBER = {MR635216 (82k:81062)},
}

\end{biblist}
\end{bibdiv}
\end{document}